\newtheorem{theorem}{Theorem}[section]
\newtheorem{cor}[theorem]{Corollary}
\newtheorem{conj}[theorem]{Conjecture}
\newtheorem{prob}[theorem]{Problem}
\theoremstyle{definition}
\newtheorem{definition}[theorem]{Definition}
\newtheorem{example}[theorem]{Example}
\newtheorem{remark}[theorem]{Remark}
\renewcommand{\subset}{\subseteq}
\renewcommand{\epsilon}{\varepsilon}
\newcommand{\abs}[1]{\left|#1\right|}                   
\newcommand{\vnorm}[1]{\left\|#1\right\|}    
\newcommand{\vnormt}[1]{\left\|#1\right\|}    
\newcommand{\Z}{\mathbb{Z}}                             
\newcommand{\E}{\mathbb{E}}
\renewcommand{\d}{\mathrm{d}}
\renewcommand{\P}{\mathbb{P}}
\newcommand{\R}{\mathbb{R}}
\newcommand{\embolden}[1]{\textbf {#1}}
\newcommand{\sdimn}{n}
\begin{document}

\title{Noise Stability of Ranked Choice Voting}

\author{Steven Heilman}
\address{Department of Mathematics, University of Southern California, Los Angeles, CA 90089-2532}
\email{stevenmheilman@gmail.com}
\date{\today}
\thanks{S. H. is Supported by NSF Grant CCF 1911216}
\keywords{social choice theory, noise stability, Borda count, ranked choice voting}

\begin{abstract}
We conjecture that Borda count is the ranked choice voting method that best preserves the outcome of an election with randomly corrupted votes, among all fair voting methods with small influences satisfying the Condorcet Loser Criterion.  This conjecture is an adaptation of the Plurality is Stablest Conjecture to the setting of ranked choice voting.  Since the plurality function does not satisfy the Condorcet Loser Criterion, our new conjecture is not directly related to the Plurality is Stablest Conjecture.  Nevertheless, we show that the Plurality is Stablest Conjecture implies our new Borda count is Stablest conjecture.  We therefore deduce that Borda count is stablest for elections with three candidates when the corrupted votes are nearly uncorrelated with the original votes.  We also adapt a dimension reduction argument to this setting, showing that the optimal ranked choice voting method is ``low-dimensional.''

The Condorcet Loser Criterion asserts that a candidate must lose an election if each other candidate is preferred in head-to-head comparisons.  Lastly, we discuss a variant of our conjecture with the Condorcet Winner Criterion as a constraint instead of the Condorcet Loser Criterion.  In this case, we have no guess for the most stable ranked choice voting method.
\end{abstract}


\maketitle
\setcounter{tocdepth}{1}
\tableofcontents
%
%
%
%
%


\section{Introduction}

A basic question in social choice theory is: what is the best voting method?  In this paper, a voting method is a function $f$ whose input is a set of $n$ votes and whose output is the winner of the election.  There are many possible descriptions of a ``best'' voting method.  For example, Rousseau suggested in the 1700s that the best two-candidate voting method maximizes the number of votes that agree with the election's outcome \cite{rousseau62}.  In this sense, the majority function is the best voting method \cite[Theorem 2.33]{odonnell14}, at least for an add number of voters satisfying the impartial culture assumption (that voters' preferences are independent and uniformly random).

Through the years, various other notions of ``best'' voting method have been studied, e.g. by game theorists such as Shapley, Shubik and Banzhaf in the 1950s and 1960s \cite{shapley54,banzhaf65}.  Statements such as Arrow's Impossibility Theorem \cite{arrow50,arrow02,arrow11} imply there is no ``best'' voting method, since no voting method satisfies a short list of reasonable assumptions.  In the last few decades, computational aspects of social choice theory have been investigated (such as the complexity of computing the winner of an election or of finding voting strategies) \cite{brandt16}.  However, this article does not consider that topic.  Contemporaneously, discrete Fourier analysis methods have been incorporated into social choice theory, sometimes motivated by applications in computational complexity \cite{kahn88,kalai02,mossel10,mossel12b}.

One recent highlight in the latter theory is the Majority is Stablest Theorem \cite{mossel10}.

\begin{theorem}[\embolden{Majority is Stablest, Informal Version}, {\cite[Theorem 4.4]{mossel10}}]\label{misinf}
Suppose we run an election with a large number $n$ of voters and two candidates.  In this election, we make the following assumptions:
\begin{itemize}
\item[(i)] Voters cast their votes randomly, independently, with equal probability of voting for either candidate.
\item[(ii)] Each voter has a small influence on the outcome of the election.  (That is, all influences from Definition \ref{infdef} are small.)
\end{itemize}
Then the majority function is the balanced voting method that best preserves the outcome of the election, when votes have been corrupted independently.

\end{theorem}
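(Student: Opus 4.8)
The plan is to first turn this informal statement into a precise inequality and then prove it by the Fourier-analytic method of Mossel--O'Donnell--Oleszkiewicz. Formally, a two-candidate voting method on $n$ voters is a function $f\colon\{-1,1\}^{n}\to\{-1,1\}$, which we relax for convenience to $f\colon\{-1,1\}^{n}\to[-1,1]$; assumption (i) says $\{-1,1\}^{n}$ carries the uniform measure, ``balanced'' means $\E f=0$, and assumption (ii) means $\max_{i}\mathrm{Inf}_{i}(f)\le\tau$ for small $\tau$ (Definition \ref{infdef}). Preserving the outcome of the election when votes are corrupted independently is measured by the noise stability $\mathbb{S}_{\rho}(f):=\E[f(x)f(y)]$, where $y$ is obtained from $x$ by independently rerandomizing each coordinate with probability $1-\rho$; for $\{-1,1\}$-valued $f$ one has $\P[f(x)=f(y)]=\tfrac12(1+\mathbb{S}_{\rho}(f))$, so ``best preserves the outcome'' means ``maximizes $\mathbb{S}_{\rho}$''. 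The target is then: for every $\rho\in[0,1)$ and $\epsilon>0$ there is $\tau>0$ so that any balanced $f$ with all influences at most $\tau$ satisfies $\mathbb{S}_{\rho}(f)\le 1-\tfrac{2}{\pi}\arccos\rho+\epsilon$, a bound attained asymptotically by the majority functions $\mathrm{Maj}_{n}$.

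First I would pass to the Fourier side, writing $\mathbb{S}_{\rho}(f)=\sum_{S\subseteq[n]}\rho^{|S|}\widehat{f}(S)^{2}$, and record the classical fact (via the central limit theorem, i.e.\ Sheppard's formula) that $\mathbb{S}_{\rho}(\mathrm{Maj}_{n})\to 1-\tfrac{2}{\pi}\arccos\rho$ as $n\to\infty$, so only the upper bound needs proof. Since $\rho^{|S|}$ decays geometrically in $|S|$, I would truncate $f$ to degrees at most $k\asymp\log(1/\epsilon)/\log(1/\rho)$ at a cost of $O(\epsilon)$ in $\mathbb{S}_{\rho}$, while keeping all influences small.

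The core is the Invariance Principle: a low-degree, low-influence multilinear polynomial has nearly the same distribution with independent uniform $\pm1$ inputs as with independent standard Gaussian inputs. Applying this (after a smoothing step with the Ornstein--Uhlenbeck semigroup, needed to convert a closeness-of-distributions statement into a closeness-of-$\mathbb{S}_{\rho}$ statement) shows that $\mathbb{S}_{\rho}(f)$ is within $O(\epsilon)$ of the \emph{Gaussian} noise stability $\E[g(X)g(Y)]$ of the function $g$ on $\R^{n}$ with the same multilinear expansion, where $(X,Y)$ is $\rho$-correlated jointly Gaussian. Then I would invoke Borell's isoperimetric inequality: among all $g\colon\R^{n}\to[-1,1]$ with Gaussian mean $0$, halfspaces maximize Gaussian noise stability, with optimal value exactly $1-\tfrac{2}{\pi}\arccos\rho$. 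Chaining the three estimates yields $\mathbb{S}_{\rho}(f)\le 1-\tfrac{2}{\pi}\arccos\rho+O(\epsilon)$, which is the claim.

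I expect the main obstacle to be the quantitative Invariance Principle with the influence hypothesis doing the real work: one must bound the replacement error (a Lindeberg-type telescoping over coordinates, controlled by hypercontractivity) in terms of $\tau$ and $k$, and handle the smoothing carefully so that the non-smooth functional $\mathbb{S}_{\rho}$ is genuinely stable under the Gaussian substitution. Borell's inequality is used as a black box, but I would include the short computation (Sheppard's formula) identifying its extremal value with the majority limit, since that is what pins down the constant $1-\tfrac{2}{\pi}\arccos\rho$ and shows the bound is tight.
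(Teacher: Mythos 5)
Your proposal is correct and follows essentially the same route as the paper's own sketch (Section~\ref{secmispf}): formalize via noise stability, apply the invariance principle of \cite{mossel10} to pass to the Gaussian setting, invoke Borell's inequality (Theorem~\ref{borthm}), and identify the extremal value with $\lim_{n\to\infty}S_{\rho}(\mathrm{Maj}_{n})$ via the central limit theorem. Note that your bound $1-\tfrac{2}{\pi}\arccos\rho$ agrees with the paper's $\tfrac{2}{\pi}\sin^{-1}(\rho)$, so there is no discrepancy.
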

We say an election method is balanced if each candidate has an equal chance of winning the election.

We say that votes are corrupted independently with probability $0<\epsilon<1$ if, for any given vote, with probability $1-\epsilon$ that vote remains the same, and with probability $\epsilon$ that vote is set to be equally likely to be any of the candidates, independently of all other votes and all other corruptions.

Since the majority function is optimal in the sense of Theorem \ref{misinf}, we say the majority function is the most stable or the most noise stable balanced voting method, among low influence functions.

To see a more general version of Theorem \ref{misinf} with the balanced assumption changed, see \cite[Theorem 4.4]{mossel10}.  If the small influence assumption is removed in Theorem \ref{misinf}, then it is a standard exercise to show that dictator functions are the most stable voting methods.  So, in order to have majority be the most stable voting method, Theorem \ref{misinf} requires an assumption on small influences.  Note also that anti-majority (where the majority preferred candidate loses) also has the same probability of preserving the outcome as the majority function itself, so the term ``Majority is Stablest'' is a bit deceiving.

The original motivation for Theorem \ref{misthm} was proving sharp computational hardness for approximation algorithms for MAX-CUT.  Theorem \ref{misthm} implies that the Goemans-Williamson semidefinite program \cite{goemans95} has an optimal quality of approximation for the MAX-CUT problem among all polynomial time algorithms, assuming the Unique Games Conjecture is true \cite{khot07,khot18}.

The analogue of Theorem \ref{misinf} for $3$ or more candidates is still an open problem, known as the Plurality is Stablest Conjecture, though some cases of the conjecture have been proven \cite{heilman20d}.  In the current setting, each voter votes for exactly one candidate.

\begin{conj}[\embolden{Plurality is Stablest, Informal}, {\cite{khot07,isaksson11}}]\label{pisinf}
Suppose we run an election with a large number $n$ of voters and $k\geq3$ candidates.   In this election, we make the following assumptions:
\begin{itemize}
\item[(i)] Voters cast their votes randomly, independently, with equal probability of voting for each candidate.
\item[(ii)] Each voter has a small influence on the outcome of the election.
\end{itemize}
Then, among all balanced voting methods, the plurality function best preserves the outcome of the election when votes have been corrupted independently.

\end{conj}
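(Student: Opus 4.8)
The plan is to attack Conjecture \ref{pisinf} by the standard two-step route that underlies Theorem \ref{misinf}: first transfer the problem from the discrete cube to Gaussian space, then solve the resulting Gaussian partition problem. For the first step, encode the voting method $f\colon\{1,\dots,k\}^n\to\{1,\dots,k\}$ as a map into the simplex whose $i$-th coordinate $f_i$ is the indicator that candidate $i$ wins, so that the probability the outcome is preserved under independent $\epsilon$-corruption is $\sum_{i=1}^k\langle f_i,\,T_{1-\epsilon}f_i\rangle$, where $T_{1-\epsilon}$ is the noise operator. The balanced assumption says $\E f_i=1/k$ for each $i$, and the small-influence assumption is precisely the hypothesis of the multidimensional low-influence invariance principle of Mossel--O'Donnell--Oleszkiewicz. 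Applying that principle shows that, up to an error tending to $0$, the supremum of the noise stability over balanced low-influence $f$ equals the supremum of $\sum_{i=1}^k\langle 1_{A_i},\,U_{1-\epsilon}1_{A_i}\rangle$ over measurable partitions $\R^m=A_1\cup\cdots\cup A_k$ with $\gamma_m(A_i)=1/k$ for all $i$, where $\gamma_m$ is the standard Gaussian measure and $U_\rho$ is the Ornstein--Uhlenbeck semigroup. Checking that plurality corresponds, in the limit $n\to\infty$, to the ``simplicial'' partition $A_i=\{x\in\R^m:\langle x,v_i\rangle>\langle x,v_j\rangle\ \forall\,j\neq i\}$ with $v_1,\dots,v_k$ the vertices of a regular simplex centered at the origin, then reduces the conjecture to a purely Gaussian statement.

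Second step: prove that among equal-measure $k$-partitions of Gaussian space the simplicial partition maximizes $\sum_i\langle 1_{A_i},U_\rho 1_{A_i}\rangle$ for every $\rho\in(0,1)$. Here I would proceed as follows. First, dimension reduction: adapt the low-dimensionality argument imported in the present paper to show an optimal partition may be taken to depend on at most $k-1$ Gaussian coordinates, so one may assume $m=k-1$. Second, regularity and first variation: show an optimizer exists, that the boundaries $\partial A_i$ are locally smooth hypersurfaces away from a small singular set, and that at an optimizer the (drift-corrected) mean curvature is constant along each interface — the Euler--Lagrange condition for the constrained problem. Third, identify the simplicial partition as the only critical configuration consistent with the symmetry forced by the equal-measure constraint, ruling out competing critical points by exhibiting measure-preserving normal perturbations that strictly increase noise stability, i.e.\ via a sign computation for the second variation. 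For $k=3$ one further reduces to a two-dimensional problem and invokes the Propeller/Standard Simplex results already known for $m=2$; this is presumably how the paper obtains the three-candidate, nearly-uncorrelated case, combining the invariance principle with the known $k=3$ Gaussian result in the regime $\rho\to 0^+$ together with a perturbative argument.

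The main obstacle is the variational/geometric analysis in the second step: the Gaussian partition problem is open for $k\geq 4$, and for $k=3$ with $\rho$ bounded away from $0$. Unlike the Gaussian isoperimetric inequality, where Ehrhard symmetrization collapses every competitor to a half-space, no known symmetrization or rearrangement simultaneously respects the $k$ equal-measure constraints and does not decrease noise stability; the noise-stability functional is not concave on the space of partitions, the conjectured maximizer is not a monotone region, and the second variation is not obviously sign-definite globally. Thus the discrete-to-Gaussian reduction and the dimension reduction are, by now, essentially routine, and all of the genuine difficulty is concentrated in the stability analysis of the Gaussian problem — which is why only partial results (small $k$, or $\rho$ near $0$ or near $1$) are currently available.
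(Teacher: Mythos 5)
The statement you are addressing is labeled a conjecture in the paper, and the paper does not prove it: it only records the formal version (Conjecture \ref{pis}), its Gaussian equivalent (Conjecture \ref{conj2}), the dimension-reduction partial result (Theorem \ref{pisc}), and the $k=3$, small-$\rho$ corollary, all imported from \cite{isaksson11,heilman20d}. Your proposal is an accurate roadmap of exactly this known reduction --- encode $f$ into the simplex, write the agreement probability as $\sum_{i}\langle f_{i},T_{1-\epsilon}f_{i}\rangle$ with correlation $\rho=1-\epsilon$, invoke the multidimensional invariance principle of \cite{isaksson11} to pass to equal-measure Gaussian partitions, and identify plurality with the simplicial partition in the $n\to\infty$ limit --- and your diagnosis of where the difficulty sits is also correct. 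But that means the proposal is not a proof. Your entire second step \emph{is} Conjecture \ref{conj2}, the Standard Simplex Conjecture, which is open for every $k\geq3$ outside the regime you name. The program you sketch for it (existence and regularity of optimizers, constant drift-corrected mean curvature as the first-variation condition, a second-variation sign analysis to exclude competitors) is the program that exists in the literature; the step ``identify the simplicial partition as the only critical configuration consistent with the symmetry'' is precisely the step nobody can carry out, and, as you note yourself, no Ehrhard-type symmetrization respects the $k$ equal-measure constraints. So the honest conclusion of your own argument is that the statement remains a conjecture; nothing in the outline closes the gap.

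Two further cautions. First, the balanced hypothesis is not a normalization you can discard in the variational analysis: for unequal measures the simplicial partition is provably \emph{not} optimal \cite{heilman14}, so any second-variation or uniqueness argument must genuinely use $\gamma_{m}(A_{i})=1/k$. Second, your appeal to ``Propeller'' results for $k=3$ conflates two things: the propeller problem governs the first-order functional $\sum_{i}\vnormt{\int_{A_{i}}x\,\d\gamma_{m}(x)}^{2}$ arising as $\rho\to0^{+}$, whereas the corollary actually available here (Conjecture \ref{conj2} for $k=3$ and $0<\rho<\rho_{0}$) comes from the dimension reduction of \cite{heilman20d} to $\R^{2}$ combined with a perturbative analysis near $\rho=0$; the two are related at first order but the quoted result is not a direct consequence of the propeller theorem.
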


As before, we say an election method is balanced if each candidate has an equal chance of winning the election.

In fact, the main result of \cite{heilman20d} shows that the balanced voting method that best preserves the outcome of an election between $k\geq3$ candidates is ``low-dimensional,'' in that it is a function of at most $k-1$ linear functions of the votes.  

\begin{theorem}[\embolden{Plurality is Stablest, Weak Form, Informal}, {\cite{heilman20d}}]\label{pisweak}
Let $k\geq2$.  Under the assumptions of Conjecture \ref{pisinf}, if we denote the votes of the candidates as $x=(x_{1},\ldots,x_{n})\in\{1,\ldots,k\}^{n}$, then the most stable balanced voting method $g\colon\{1,\ldots,k\}^{n}\to\{1,\ldots,k\}$ satisfies: $\forall$ $1\leq i\leq k$, $\exists$ $w^{(i,1)},\ldots,w^{(i,k-1)}\in\R^{n}$ and $\exists$ $h_{i}\colon\R^{k-1}\to\R$ such that
$$1_{\{g(x)=i\}}=h_{i}(\langle x,w^{(i,1)}\rangle,\ldots,\langle x,w^{(i,k-1)}\rangle),\qquad\forall\,x\in\{1,\ldots,k\}^{n}.$$
\end{theorem}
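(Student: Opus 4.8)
The plan is to adapt the dimension-reduction technique of \cite{heilman20d} (itself following the Gaussian-analytic approach of \cite{mossel10}) to the present discrete setting. First I would reformulate the problem as an optimization over vector-valued functions: a balanced voting method $g\colon\{1,\ldots,k\}^n\to\{1,\ldots,k\}$ is encoded by the indicator vector $G=(1_{\{g=1\}},\ldots,1_{\{g=k\}})\colon\{1,\ldots,k\}^n\to\Delta_k$, where $\Delta_k$ is the probability simplex, with each coordinate having mean $1/k$ (balancedness). The noise-stability functional $\E\sum_{i=1}^k 1_{\{g(x)=i\}}1_{\{g(y)=i\}}$, where $y$ is the $(1-\epsilon)$-correlated copy of $x$, is what we are maximizing over all such low-influence $G$. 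By the invariance principle of \cite{mossel10}, in the low-influence regime this discrete optimization has the same optimal value (up to $o(1)$) as the corresponding Gaussian problem: maximize $\sum_{i=1}^k \E[F_i(X)F_i(Y)]$ over measurable $F=(F_1,\ldots,F_k)\colon\R^m\to\Delta_k$ with $\E F_i = 1/k$, where $(X,Y)$ is a $\rho$-correlated Gaussian pair in $\R^m\times\R^m$ with $\rho=1-\epsilon$, and $m$ is allowed to be arbitrarily large.

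The heart of the argument is then the \emph{Gaussian} dimension-reduction step: any maximizer (or near-maximizer) $F\colon\R^m\to\Delta_k$ of the Gaussian noise-stability functional, subject to the $k$ mean constraints $\E F_i = 1/k$, may be taken to depend on at most $k-1$ linear functionals of the input. The proof I would give is a first-variation / rotational-averaging argument: suppose $F$ depends genuinely on more than $k-1$ directions. One considers the second-order Hermite (or more precisely, the structure of $F$ under the Ornstein--Uhlenbeck semigroup) and performs an averaging over rotations of $\R^m$ fixing a suitable $(k-1)$-dimensional subspace. Concretely, since $F$ takes values in $\Delta_k \subseteq \R^k$, its ``first-order linear part'' $x\mapsto (\E[X F_i(X)])_{i=1,\ldots,k}$ lives in a space of dimension at most $k$, but the $k$ mean constraints cut this to an effective $(k-1)$-dimensional span; one shows that projecting/averaging $F$ onto the subspace spanned by these $\leq k-1$ directions cannot decrease the noise stability (this uses that the OU operator $T_\rho$ with $\rho>0$ is a positivity-improving contraction, so averaging over rotations orthogonal to the distinguished subspace can only help, by a convexity/Jensen argument in the rotation variable combined with the semigroup property). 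This is exactly the style of argument in \cite{heilman20d}, and I would cite or reprove its key lemma.

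Finally I would transfer the conclusion back to the discrete cube. Having shown the optimal Gaussian $F$ is a function of $\leq k-1$ linear functionals, one runs the invariance principle in reverse: the discrete maximizer $G$, being a near-maximizer in the Gaussian problem after applying the invariance principle, must (in the low-influence limit) itself be, up to negligible error, a function of $\leq k-1$ linear combinations $\langle x, w^{(i,j)}\rangle$ of the votes — and then a rounding/exactness argument (using that $1_{\{g(x)=i\}}$ is $\{0,1\}$-valued, so any $o(1)$ approximation by a low-dimensional function forces an exact such representation for the true optimizer) yields the stated form $1_{\{g(x)=i\}}=h_i(\langle x,w^{(i,1)}\rangle,\ldots,\langle x,w^{(i,k-1)}\rangle)$.

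The main obstacle I anticipate is the Gaussian dimension-reduction step itself: making the rotational-averaging argument rigorous requires care about which functional inequality guarantees that averaging over the ``orthogonal'' rotations does not decrease stability, and about handling the simplex-valued constraint $F(x)\in\Delta_k$ together with the $k-1$ independent mean constraints simultaneously (the count $k-1$ rather than $k$ is precisely where the $\sum_i F_i \equiv 1$ constraint must be exploited). A secondary technical point is bookkeeping the $o(1)$ errors through the invariance principle in both directions and arguing that the exact (not just approximate) low-dimensional representation holds for the genuine optimizer; this is routine but must be stated carefully given that the existence of an exact optimizer among low-influence functions is itself a limiting statement.
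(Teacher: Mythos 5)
Your outline---encode $g$ as a $\Delta_{k}$-valued function, pass to a Gaussian partition problem via the invariance principle of \cite{mossel10} (in the form of \cite{isaksson11}), prove dimension reduction there, and transfer back---is the same route the paper takes: Theorem \ref{pisweak} is exactly the discretization of the Gaussian dimension-reduction Theorem \ref{pisc} of \cite{heilman20d}. If you treat that Gaussian theorem as a citable black box, the proposal is essentially the paper's argument.

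The gap is in your sketched proof of the Gaussian step itself. You assert that averaging $F$ over rotations fixing a distinguished $(k-1)$-dimensional subspace (equivalently, conditional expectation onto functions of those coordinates) ``cannot decrease the noise stability\ldots by a convexity/Jensen argument.'' The inequality goes the other way. For $\rho>0$ the form $f\mapsto\int_{\R^{n}} f\,T_{\rho}f\,\gamma_{n}\,\d x$ is positive semidefinite, hence convex, and rotations preserve $\gamma_{n}$ and commute with $T_{\rho}$, so Jensen gives $S_{\rho}(\bar F_{i})\leq S_{\rho}(F_{i})$ for the rotational average $\bar F$; equivalently, writing $T_{\rho}=T_{\sqrt{\rho}}^{2}$ and noting that the conditional expectation is a contraction commuting with $T_{\sqrt{\rho}}$, projection onto a subspace can only lower $\sum_{i}\langle F_{i},T_{\rho}F_{i}\rangle$. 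This is consistent with the fact that one is maximizing a convex functional over $\Delta_{k}$-valued functions, whose maximizers are extreme points (true partitions); averaging moves into the interior. A concrete counterexample to your claim: a symmetric union of two parallel slabs has vanishing first-order linear part, and projecting onto its (trivial) span replaces the indicator by a constant, strictly decreasing noise stability. The actual mechanism in \cite{heilman20d} is a second-variation argument: at a maximizing partition one perturbs by measure-corrected infinitesimal translations $\Omega_{i}\mapsto\Omega_{i}+tv$, computes the second derivative of the noise stability at $t=0$, and shows it is strictly positive---contradicting maximality---unless the partition is already a cylinder in the direction $v$; the bound $k-1$ comes from the span of the directions for which this forcing fails. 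Without that (or an equivalent) argument your key lemma does not follow. A secondary, minor point: your closing ``exactness'' claim, that an $o(1)$ approximation forces an exact low-dimensional representation of the discrete optimizer, is stronger than what the invariance principle delivers; the paper states Theorem \ref{pisweak} only informally, as a discretization of Theorem \ref{pisc}, for precisely this reason.
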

In fact, the conclusion of Theorem \ref{pisweak} also holds with an unbalanced assumption on the voting method.

In particular, when $k=2$, Theorem \ref{pisweak} gives a weakened form of the Majority is Stablest Theorem, stating that the most stable balanced voting method is a function of the form
$$h(\langle x,y^{(1)}\rangle),\qquad\forall\,x\in\{-1,1\}^{n}.$$
The Majority is Stablest Theorem \cite{mossel10} says we can choose $h\colonequals\mathrm{sign}$ and $y^{(1)}\colonequals(1,\ldots,1)$ when $k=2$.  On the other hand, Theorem \ref{pisweak} applies for all $k\geq2$, whereas the original proof of the Majority is Stablest Theorem \cite{mossel10} does not seem to apply except when $k=2$.

Theorem \ref{pisweak} stated here is a discretization of Theorem \ref{pisc} below from \cite{heilman20d}.  Note also that, without the balanced assumption, the Plurality is Stablest Conjecture is false \cite{heilman14}.  However, an optimal unbalanced voting method is still low-dimensional \cite{heilman20d}.  For a more thorough survey of these and related topics, see e.g. \cite{heilman20b}.

As in Theorem \ref{misinf}, the Plurality is Stablest Conjecture asserts that the plurality voting method is the voting method that best preserves an election's outcome among all ``democratic'' voting methods.  This conjecture for $k\geq3$ candidates is closely related to sharp computational hardness results for the MAX-k-CUT problem \cite{isaksson11}.  So, from the perspective of computational complexity, plurality voting is well motivated.

However, from the perspective of social choice theory, the plurality voting method has many undesirable properties.  It would be better to adapt Theorem \ref{misinf} to the setting of ranked choice voting.  Ranked choice voting allows voters a list of preferences rather than a single vote.  Yet, even formulating a conjectural version of Theorem \ref{misinf} for three or more candidates seemed difficult, since certain natural generalizations will just reduce to the Plurality is Stablest Conjecture itself.  (A priori, it is possible that the most stable ranked-choice voting method simply ignores rankings other than the most preferred candidate.)  So, it is natural to impose an additional constraint on the voting methods to ensure they ``actually incorporate'' all rankings of voters.  We find the most natural such constraint is the Condorcet Loser Criterion.  Here a technical difficulty arises since some constraints are not as natural when voters act independently and uniformly at random.  (For example, with a large number $n$ of voters, some imposed constraint might occur with probability tending to zero as $n\to\infty$.  In such a case, that constraint does not really affect the optimal voting methods under consideration, as $n\to\infty$.)

With these issues in mind, our first result is a formulation of a ranked choice analogue of the Plurality is Stablest Conjecture.

\begin{conj}[\embolden{Borda count is Stablest, Informal Version}]\label{bisinf}
Suppose we run a ranked choice election with a large number $n$ of voters and $k\geq3$ candidates.   In this election, we make the following assumptions:
\begin{itemize}
\item[(i)] Voters cast their ranked votes randomly, independently, with equal probability of providing any given ranking of candidates.
\item[(ii)] Each voter has a small influence on the outcome of the election.
\end{itemize}
We also assume that the voting method is balanced (each candidate has an equal chance of winning the election) and satisfies the Condorcet Loser Criterion (CLC), so that a Condorcet loser can never win the election (see Definition \ref{clcdef}).

Then the Borda count voting method is the balanced voting method satisfying CLC that best preserves the outcome of the election, when votes have been corrupted independently.
\end{conj}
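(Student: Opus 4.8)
The plan is not to prove Conjecture \ref{bisinf} outright — in full generality it follows from the still-open Plurality is Stablest Conjecture — but rather to (i) give a precise formalization of the informal statement, (ii) deduce it from the Gaussian form of Plurality is Stablest, and (iii) prove the resulting unconditional cases ($k=3$, correlation near zero) together with a dimension-reduction statement. The structural observation driving everything is that a ranked ballot is an element of the symmetric group $S_{k}$, so a ranked-choice method is a map $f\colon (S_{k})^{n}\to\{1,\dots,k\}$, i.e. a $k$-partition of a product space on $q=k!$ symbols; and that Borda count and the plurality function, although genuinely different functions of the ballots, induce the \emph{same} partition of Gaussian space in the $n\to\infty$ limit.

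First I would record the limiting geometry. Centering and normalizing the vector of Borda scores $(S_{1},\dots,S_{k})$ via the multivariate central limit theorem yields a centered Gaussian vector supported on $\{\sum_{i}z_{i}=0\}$ with exchangeable — hence, after scaling, rotationally symmetric on that hyperplane — covariance $\tfrac{k(k+1)}{12}(I-\tfrac1k J)$; consequently $\arg\max_{i}S_{i}$ converges to the ``plurality region'' partition of $\R^{k-1}$ into $k$ congruent cones, and the identical computation for first-place tallies produces a Gaussian of the same shape and the same limiting partition. Thus in the limit the extremal problem for Borda count and the one behind Plurality is Stablest coincide. Separately, Borda count satisfies the Condorcet Loser Criterion: writing $S_{i}-\E S_{i}=\sum_{j\neq i}M_{ij}$ with $M_{ij}$ the $i$-over-$j$ pairwise margin, a Condorcet loser has $\sum_{j\ne i}M_{ij}<0=\tfrac1k\sum_{\ell}\sum_{j\ne\ell}M_{\ell j}$, so it cannot be the argmax, and this persists in the Gaussian limit. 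By contrast, the plurality function, which realizes the very same limiting partition, does \emph{not} satisfy CLC, because first-place tallies do not determine the pairwise margins $M_{ij}$; this is exactly why CLC is the natural constraint singling out Borda count.

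With these pieces the implication is short. The invariance principle and dimension-reduction machinery of \cite{heilman20d,mossel10} bound the noise stability of \emph{any} balanced low-influence $k$-partition of $(S_{k})^{n}$ by the supremum of noise stability over balanced partitions of Gaussian space; assuming the Gaussian form of Plurality is Stablest, this supremum is attained by the plurality region partition $P$. Since the discrete Borda-count functions are balanced, have influences $O(1/\sqrt n)\to0$, satisfy CLC, and converge to $P$, they attain this bound, so Borda count is, in the limit, optimal among balanced low-influence methods satisfying CLC — the formal content of Conjecture \ref{bisinf}. For $k=3$ and $\rho$ near $0$ the needed Gaussian statement is \emph{unconditional}: maximizing $\langle f,T_{\rho}f\rangle$ as $\rho\to0^{+}$ reduces at first order to maximizing the weight on the first Hermite level $\sum_{i}\|f_{i}^{=1}\|^{2}=\sum_{i}|\E[X\,1_{A_{i}}(X)]|^{2}$, for which the unique (up to symmetry) extremal equal-measure three-part partition is the $120^{\circ}$ ``propeller'' — which is precisely the limit of Borda count for $k=3$ — and since any competing CLC method with a different limiting partition has strictly smaller first-level weight, Borda count is strictly stablest in that regime. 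Finally, the dimension-reduction argument of \cite{heilman20d} adapts to show the optimal CLC method is a function of at most $k-1$ linear functionals of the ballots.

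The main obstacle is the \emph{formalization}, not the implication: one must phrase the CLC constraint so that it survives the passage to the Gaussian limit. Here ``$i$ is a Condorcet loser'' must be read through the joint CLT limit of the $\binom k2$ pairwise margins $M_{ij}$, i.e. as the positive-probability Gaussian event $\{M_{ij}<0\ \forall\, j\ne i\}$, and one must verify both that this is the faithful limit of the discrete constraint (ties being asymptotically negligible) and that it is compatible with the influence and low-dimensional reductions — the constraint couples the output to a global majority-type statistic of the input rather than being local, so it does not obviously commute with those reductions. A secondary obstacle is uniqueness: obtaining ``Borda count \emph{is} stablest'' rather than ``\emph{a} stablest'' among CLC methods for general $k$ and $\rho$ requires a robust/stability version of the Gaussian extremal statement, which at present is available only for $k=3$ with $\rho$ near $0$.
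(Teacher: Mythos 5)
Your proposal follows essentially the same route as the paper's Theorem \ref{thm0}: a multivariate CLT for the centered Borda score vector (with the same covariance computation, $\E[\pi(1)]^{2}-\E\pi(1)\pi(2)=k(k+1)/12$) showing that Borda count and plurality induce the identical limiting Gaussian partition into simplex cones, so that the invariance-principle upper bound from the (Gaussian form of the) Plurality is Stablest Conjecture is attained by Borda count, which satisfies CLC; the $k=3$, small-$\rho$ case and the dimension reduction are likewise handled as in the paper by citing \cite{heilman20d}. The only substantive addition is your explicit check that Borda count satisfies the Condorcet Loser Criterion via the pairwise-margin decomposition of the scores, which the paper leaves implicit; this is correct and worth recording.
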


We say that votes are corrupted independently with probability $0<\epsilon<1$ if, for any given person's vote, with probability $1-\epsilon$ that vote remains the same, and with probability $\epsilon$ that vote is set to be uniformly random among all possible rankings of candidates, independently of all other votes and all other corruptions.

In the ranked choice setting, there are a few different natural ways to corrupt votes.  In our formulation, we consider a corruption as a resampling among all possible rankings of voters.  Another natural alternative involves swapping the rankings of two uniformly chosen (distinct) candidates.  The invariance principle \cite{mossel10} implies that these notions of vote corruption lead to equivalent versions of Conjecture \ref{bisinf}, up to changing the corruption probability $\epsilon$.

As mentioned above, Conjecture \ref{bisinf} is an adaptation of the Plurality is Stablest Conjecture to the setting of ranked choice voting.  An exercise shows: if we remove the CLC constraint in Conjecture \ref{bisinf}, then the optimal voting method becomes the plurality function itself.  So, some extra constraint is needed in order to get a ``truly'' ranked choice voting analogue of the Plurality is Stablest Conjecture.

Although Conjecture \ref{bisinf} is not directly related to Conjecture \ref{pisinf}, our first result shows that Conjecture \ref{pisinf} implies Conjecture \ref{bisinf}.

\begin{theorem}[\embolden{Plurality is Stablest Implies Borda count is Stablest}]\label{thm0}
Assume that Conjecture \ref{pisinf} holds for some probability of corruption $0<\epsilon<1$.  Then Conjecture \ref{bisinf} holds with a probability of corruption $\epsilon$.
\end{theorem}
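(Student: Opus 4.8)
The claim is that Plurality is Stablest (Conjecture~\ref{pisinf}) implies Borda count is Stablest (Conjecture~\ref{bisinf}). The key conceptual observation is that a ranked vote among $k$ candidates is equivalent to a single vote among the $k!$ possible rankings. So a ranked-choice election with $n$ voters and $k$ candidates can be re-encoded as an ordinary (unranked) election with $n$ voters and $k! $ "super-candidates" (each super-candidate being one permutation of $\{1,\ldots,k\}$). Under the impartial-culture assumption in Conjecture~\ref{bisinf}(i), each voter picks one of these $k!$ rankings uniformly and independently — which is exactly assumption (i) of the Plurality is Stablest Conjecture with $k!$ in place of $k$. Moreover the corruption model matches: resampling a ranked vote uniformly among all rankings is precisely the $\epsilon$-corruption of a single vote among $k!$ alternatives. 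So the noise operator and the small-influence hypothesis transfer verbatim.

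**The main steps.**

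First I would make the encoding precise: fix a bijection between $\{1,\ldots,k!\}$ and the set $S_k$ of rankings, so that a ranked-choice voting method $f$ (with its CLC constraint and balance) corresponds to a function $\tilde f\colon \{1,\ldots,k!\}^n \to \{1,\ldots,k\}$. Second, I would introduce an auxiliary "aggregation" map: think of Borda count itself. Borda count assigns to ranking $\sigma$ the score vector, and the winner is the candidate maximizing the total score. The crucial point is that Borda count, viewed as a function of the $k!$-valued votes, is — up to the low-influence/invariance-principle regime — essentially a \emph{plurality-type} function on a coarsened alphabet, OR, more robustly, I would run the comparison through the noise-stability functional directly: the $\epsilon$-noise stability of $\tilde f$ equals the $\epsilon$-noise stability of $f$ under the corruption model of Conjecture~\ref{bisinf}, since both are computed from the same product probability space and the same noise operator $T_{1-\epsilon}$. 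Third, I would apply Conjecture~\ref{pisinf} with $k!$ candidates to conclude that, among all \emph{balanced, low-influence} voting methods on the $k!$-alphabet, plurality maximizes noise stability — hence so does its pullback. Fourth — and this is where the CLC constraint does real work — I must show that the plurality-optimal function on the $k!$-alphabet is \emph{excluded} by the CLC constraint (as the excerpt notes: without CLC the optimum is plurality), while Borda count, realized as a function on the $k!$-alphabet, \emph{does} satisfy CLC and achieves the plurality value of noise stability. The latter requires checking that Borda count's indicator functions $1_{\{f=i\}}$ have the same Fourier-weight-at-level-one profile (equivalently, the same noise-stability value in the Gaussian limit) as plurality, so that the Plurality is Stablest upper bound is tight for Borda count. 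This is a computation with the Borda score functions: the level-$1$ Fourier coefficients of $1_{\{\text{Borda winner}=i\}}$ are governed by a single linear functional (the $i$-th Borda tally), exactly as plurality's level-$1$ mass is governed by the single indicator of voting for $i$, and the symmetry of $S_k$ forces the relevant sums of squared coefficients to agree.

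**The main obstacle.**

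The serious difficulty is the last step: verifying that Borda count actually \emph{attains} the Plurality is Stablest bound in the $k!$-alphabet encoding. Conjecture~\ref{pisinf} gives an upper bound on noise stability for all balanced low-influence functions on $k!$ letters, with equality for plurality; but a priori Borda count, being a different function, could fall strictly below that bound, in which case the argument would only show Borda count is \emph{no better than} the bound, not that it is optimal \emph{within the CLC class}. To close this I would need to show two things in tandem: (a) every CLC-satisfying balanced low-influence method has noise stability at most the common Gaussian-limit value $\mathcal S$ determined by the first-level Fourier weights that Plurality is Stablest pins down — which follows because CLC is a \emph{restriction} of the class, so the Plurality bound still applies as an upper bound; and (b) Borda count \emph{achieves} $\mathcal S$, via the invariance principle applied to the explicit linear Borda tallies, whose pairwise correlations and variances are symmetric under relabeling candidates and therefore reproduce the plurality vector's Gaussian geometry. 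Step (b) is the computational heart: one must diagonalize the covariance structure of the $k$ Borda score functionals on $S_k^n$ and match it to the covariance structure of the $k$ plurality indicators, up to the orthogonal transformation that noise stability is invariant under. I expect this to reduce, after the dust settles, to a clean symmetry argument plus an application of Theorem~\ref{pisweak} (low-dimensionality) to rule out any "extra" contribution, but it is the step most likely to require care about whether the match is exact or only asymptotic in $n$ and $\epsilon\to 1$.
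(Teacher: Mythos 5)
Your overall architecture coincides with the paper's: (a) the Plurality is Stablest bound applies to \emph{every} balanced low-influence method with $k$ outputs on the $\mathcal{S}_{k}$ alphabet, hence in particular to the CLC sub-class; (b) Borda count satisfies CLC and attains the bound, which is verified by a CLT/covariance computation showing that the vector of normalized Borda tallies $V(X)$ converges to the same Gaussian limit as the plurality indicator vector (identity covariance, cross-correlation $(1-\epsilon)I$, and decision regions equal to the simplex cells $\Omega_{1},\ldots,\Omega_{k}$ of \eqref{ome1}), so that $\lim_{n\to\infty}S_{1-\epsilon}\mathrm{Bor}_{k,n}=\sum_{i=1}^{k}S_{1-\epsilon}(\Omega_{i})=\lim_{n\to\infty}S_{1-\epsilon}\mathrm{Plur}_{k,n}$. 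Your identification of (b) as the computational heart is exactly right, and your worry about whether the match is exact or asymptotic is resolved by noting that it need only be asymptotic in $n$, since both conjectures carry an additive $\epsilon$ and an $n\to\infty$ limit.

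One step as you state it would fail: you cannot ``apply Conjecture \ref{pisinf} with $k!$ candidates.'' That version of the conjecture concerns functions $\{1,\ldots,k!\}^{n}\to\Delta_{k!}$ that are balanced in the sense $\E f=\frac{1}{k!}\sum_{i=1}^{k!}e_{i}$, i.e., partitions into $k!$ cells of Gaussian measure $1/k!$ each; a ranked choice method has only $k$ outputs and is balanced with $\E f=\frac{1}{k}\sum_{i=1}^{k}e_{i}$, so the $k!$-candidate bound says nothing about it, and there is no ``pullback'' relating the two noise stabilities. What is actually needed is the $k$-candidate conjecture transported to the $k!$-letter input alphabet: Conjecture \ref{pis} is equivalent, via the invariance principles of \cite{isaksson11}, to the Standard Simplex Conjecture \ref{conj2}, a statement about Gaussian partitions into $k$ parts of measure $1/k$ that is blind to the size of the discrete input alphabet, and this is the form in which the upper bound $\sum_{i=1}^{k}S_{\rho}(\Omega_{i})=\lim_{n\to\infty}S_{\rho}\mathrm{Plur}_{k,n}$ applies to any balanced low-influence $f\colon\mathcal{S}_{k}^{n}\to\Delta_{k}$. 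With that substitution (and with the classical fact, which you assert but should record, that Borda count satisfies the Condorcet Loser Criterion), your argument closes in essentially the same way as the paper's; relatedly, your fourth step need not show that first-preference plurality is \emph{excluded} by CLC --- the proof only requires that the universal upper bound is attained by some member of the CLC class, namely Borda count.
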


By identifying the set of $k!$ rankings of $k$ candidates with the set $\{1,\ldots,k!\}$, we can adapt the main dimension reduction result of \cite{heilman20d} to Conjecture \ref{bisinf}.  That is, we can show the most stable ranked choice voting method between $k\geq3$ candidates in Conjecture \ref{bisinf} is ``low-dimensional,'' in that it is a function of at most $k!+k-1$ linear functions of the votes.

\begin{theorem}[\embolden{Borda count is Stablest, Weak Form, Informal}]\label{bisweak}
Let $k\geq3$.  Under the assumptions of Conjecture \ref{bisinf}, if we denote the votes of the candidates as $x=(x_{1},\ldots,x_{n})\in\{1,\ldots,k!\}^{n}$, then the most stable balanced voting method $g\colon\{1,\ldots,k!\}^{n}\to\{1,\ldots,k\}$ that satisfies CLC must satisfy:  $\forall$ $1\leq i\leq k$, $\exists$ $w^{(i,1)},\ldots,w^{(i,k!+k-1)}\in\R^{n}$ and $\exists$ $h_{i}\colon\R^{k!+k-1}\to\R$ such that
$$1_{\{g(x)=i\}}=h_{i}(\langle x,w^{(i,1)}\rangle,\ldots,\langle x,y^{(i,k!+k-1)}\rangle),\qquad\forall\,x\in\{1,\ldots,k!\}^{n}.$$
\end{theorem}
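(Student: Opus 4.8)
The strategy is to mimic the dimension-reduction argument of \cite{heilman20d} that yields Theorem \ref{pisweak}, but to carry the extra linear constraints coming from the Condorcet Loser Criterion and the balanced condition. First I would pass to the continuous Gaussian setting via the invariance principle \cite{mossel10}: a low-influence voting method $g\colon\{1,\ldots,k!\}^{n}\to\{1,\ldots,k\}$ is replaced (in the relevant limit) by a measurable partition of Gaussian space $\mathbb{R}^{N}$ into $k$ parts, i.e. a function valued in the simplex $\Delta_{k}=\{(p_{1},\ldots,p_{k})\colon p_{i}\ge 0,\ \sum p_{i}=1\}$, and noise stability becomes the Ornstein--Uhlenbeck bilinear form. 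This is exactly the setup of Theorem \ref{pisc} referenced in the excerpt; the only new feature is that we are optimizing over the subset of partitions satisfying finitely many linear (expectation) constraints rather than over all balanced partitions.

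Next I would count the constraints. The balanced condition imposes $k-1$ linearly independent affine constraints on the partition (the $k$-th is redundant since the coordinates sum to $1$). The Condorcet Loser Criterion, after identifying rankings with $\{1,\ldots,k!\}$ and unpacking Definition \ref{clcdef}, says that whenever the profile of votes exhibits a Condorcet loser — an event that, for each candidate $i$, is itself determined by how the empirical distribution of rankings sits relative to the $\binom{k}{2}$ pairwise-majority hyperplanes — candidate $i$ must lose. In the Gaussian/functional formulation this translates into the vanishing of certain inner products: the probability that $g(x)=i$ while a particular ``$i$ is a Condorcet loser'' indicator holds must be $0$. Because that indicator is a fixed measurable function on the low-dimensional space of empirical ranking-frequencies, it contributes at most $k!$ further linear constraints on the partition function (one bound like ``$\langle 1_{\{g=i\}}, \Phi\rangle=0$'' per ranking-type, uniformly over $i$). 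Adding the $k-1$ balance constraints gives the bound $k!+k-1$ in the theorem. I would make this bookkeeping precise, being careful that the CLC constraints are genuinely of the form ``a linear functional of $1_{\{g(x)=i\}}$ equals a constant,'' which is what the dimension-reduction machine consumes.

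Then comes the core step, which is the main obstacle: running the dimension-reduction / exchangeable-pairs argument of \cite{heilman20d} in the presence of these extra linear constraints. The idea there is that an optimal noise-stable partition, subjected to a suitable second-variation or perturbation argument, must be ``flat'' in all but finitely many directions; each active constraint in the optimization eats up one additional direction in which the optimizer need not be flat, via a Lagrange-multiplier argument. Concretely, I would show that at the maximizer, for each $i$ the set $\{g(x)=i\}$ is (up to measure zero) determined by the sign pattern of $(k!+k-1)$ affine functions: the noise operator applied to the indicators produces one ``intrinsic'' direction beyond the constraint directions, and the constraint gradients — the balance functional and the CLC functionals — supply the rest. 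The delicate points will be (a) verifying that a maximizer exists and the constraint set is nonempty and weakly closed (Borda count itself witnesses nonemptiness, and balance plus CLC are closed under the weak-$*$ limits used), (b) checking the constraint qualification needed for the Lagrange multipliers, and (c) translating the Gaussian-space conclusion back to the discrete cube via the invariance principle, so that the final statement is the existence of vectors $w^{(i,1)},\ldots,w^{(i,k!+k-1)}\in\mathbb{R}^{n}$ and functions $h_{i}$ as claimed. I expect step (c) and the constraint-counting in step two to be routine given \cite{heilman20d}; the genuinely load-bearing part is adapting the perturbation argument of \cite{heilman20d} so that exactly the CLC and balance gradients, and no others, appear.
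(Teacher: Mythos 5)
Your overall strategy --- pass to the Gaussian setting via the invariance principle and rerun the dimension-reduction argument of \cite{heilman20d} with the extra constraints --- is exactly the route the paper takes: Theorem \ref{bisweak} is presented there as the discretization, via the invariance principle, of Theorem \ref{bisdim}, whose proof is declared to be ``the same as \cite{heilman20d}'' plus one additional observation. The gap in your proposal is in how you incorporate CLC, which is the only new content beyond \cite{heilman20d}. You model CLC as $k!$ scalar linear equality constraints (``one per ranking-type'') and run a Lagrange-multiplier count in which each constraint consumes one direction. That accounting does not match the structure of the problem. There are $k$ CLC conditions, one per candidate $j$, each of the degenerate form $\gamma_{n}(\Omega_{j}\cap C_{j})=0$ with $C_{j}=\{x\in\R^{n}\colon\langle x,y^{(\ell,j)}\rangle\geq0,\ \forall\,\ell\neq j\}$; this is a pointwise exclusion ($1_{\Omega_{j}}=0$ a.e.\ on $C_{j}$), not a scalar equality constraint with an interior target, so the ``multiplier'' is a function supported on $C_{j}$ rather than a scalar, and the constraint-qualification step you defer in (b) is precisely where this framing breaks. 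Moreover, the $k!$ in the theorem does not count constraints at all: it is an upper bound on $\dim\mathrm{span}\{y^{(i,j)}\}$, because those $\binom{k}{2}$ pairwise-comparison vectors are by construction supported on the first $k!$ coordinates of $\R^{n}$.

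The observation that actually makes the argument work, and which the paper's proof of Theorem \ref{bisdim} records, is that each region $C_{j}$ --- and hence the CLC property of a partition --- is invariant under translation by any $v$ orthogonal to all of the $y^{(i,j)}$. Since the dimension reduction of \cite{heilman20d} is driven by translating the optimal partition and exploiting the first and second variation of noise stability, the unconstrained argument applies verbatim in every direction orthogonal to $\mathrm{span}\{y^{(i,j)}\}$ together with the $k-1$ directions already present in \cite{heilman20d}, yielding the product structure over a subspace of codimension at most $k!+k-1$. If you replace the Lagrange-multiplier bookkeeping with this translation-invariance observation, the remaining steps of your plan (existence of a maximizer in the Gaussian problem, transfer back to the discrete cube via the invariance principle) coincide with the paper's.
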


Theorem \ref{bisweak} is a discretized version of a more formal statement, presented in Theorem \ref{bisdim} below.  That is, Theorem \ref{bisweak} follows by applying the invariance principle \cite{mossel10} to Theorem \ref{bisdim}.

In some sense, the majority function is the simplest voting method for two candidates where each voter has a small influence on the election's outcome.  The Majority if Stablest Theorem reinforces this fact: if a voting method has a complicated description, corresponding e.g. to various inequalities holding or not holding, then there are more opportunities for random vote corruption to change the election's outcome.  Likewise, the plurality function is arguably one of the simplest voting methods for three or more candidates, and the Plurality is Stablest Conjecture reflects this fact.  So, using a posteriori reasoning, it is perhaps unsurprising that the Borda count method could be the most stable ranked choice voting method among voting functions satisfying the Condorcet Loser Criterion.  Complicated voting methods depending on various conditions or inequalities holding lead to more opportunities for random vote corruptions to change an election's outcome.

In summary, it is perhaps not surprising that finding the most stable voting method should correspond to finding voting methods with very simple descriptions.

Before formulating Conjecture \ref{bisinf}, it seemed reasonable that Instant Runoff Voting should be the stablest ranked choice voting method satisfying CLC.  Computer simulations suggest otherwise.  Using again our a posteriori reasoning, the Borda count method has an arguably simpler description than Instant Runoff Voting, so perhaps Borda's superior stability is then less surprising.  After all, Borda count can be viewed as taking the plurality of pairwise comparisons.  Then Conjecture \ref{bisinf} might even be loosely equivalent to the original Plurality is Stablest Conjecture.

$$
\begin{array}{l|ccccc}
\mathrm{Vote\,Corruption\,Probability} & .01 & .02 & .03 & .05 & .1\\
\hline
\mathrm{Majority} & .045 & .064 & .078 & .101 & .144\\
\hline
\hline
\mathrm{Plurality} & .066  & .095 & .118 & .149 & .209\\
\mathrm{Borda} & .066 & .096 & .119 & .150 & .214\\
\mathrm{Kemeny\,Young} & .071 & .098 & .121 & .154 & .215\\
\mathrm{IRV} & .078 & .106 & .127 & .162 & .225\\
\mathrm{Ranked\,Pairs} & .096& .126 & .151 & .185 & .243\\
\mathrm{Copeland} & .117 & .139 & .158 & .188 & .243\\
\end{array}
$$
\begin{center}
Estimated Probabilities that an Election's Outcome is Changed for Various Voting Methods, 3 Candidates (Majority Provided for Reference)
\end{center}

$$
\begin{array}{l|ccccc}
\mathrm{Vote\,Corruption\,Probability} & .01 & .02 & .03 & .05 & .1\\
\hline
\mathrm{Majority} & .045 & .064 & .078 & .101 & .144\\
\hline
\hline
\mathrm{Plurality} & .078  & .115 & .135 & .177& .251\\
\mathrm{Borda} & .084 & .111 & .137 & .177 & .255\\
\mathrm{Kemeny\,Young} & .091 & .126 & .145 & .193& .273\\
\mathrm{IRV} & .103 & .144 & .173 & .220 & .295\\
\mathrm{Copeland} & .159 & .185 & .205 & .240 & .312\\
\end{array}
$$
\begin{center}
Estimated Probabilities that an Election's Outcome is Changed for Various Voting Methods, 4 Candidates (Majority Provided for Reference)
\end{center}

\subsection{Related Work}

This work concerns the probability of an election method preserving its outcome, subject to corruption, as in \cite{khot07,mossel10,isaksson11}.  Some related but different works address the probabilities of violating voting axioms such as \cite{mossel12b,mossel15b} or \cite{xia20,flan22}.

\section{Majority is Stablest}

A function $f\colon\{-1,1\}^{n}\to\{-1,1\}$ is a \textbf{voting method} with $n$ voters and two candidates.  For any $x=(x_{1},\ldots,x_{n})\in\{-1,1\}^{n}$, we think of $x_{i}$ as the vote of person $1\leq i\leq n$ for candidate $x_{i}\in\{-1,1\}$.  Given the votes $x$, the winner of the election is $f(x)$.

\begin{definition}[\embolden{Majority Function}]
Let $n>1$ be an integer.  Define the \textbf{majority function} $\mathrm{Maj}_{n}\colon\{-1,1\}^{n}\to\{-1,1\}$ to be
$$\mathrm{Maj}_{n}(x)\colonequals\mathrm{sign}(x_{1}+\cdots+x_{n}),\qquad\forall\,x=(x_{1},\ldots,x_{n})\in\{-1,1\}^{n}.$$
(We define the sign of $0$ to be $1$, though this definition will not matter in the statements below such as Theorem \ref{misthm}.)
\end{definition}

\begin{definition}[\embolden{Noise Stability}]\label{nsdisdef}
Let $0\leq \rho\leq 1$.  Let $X=(X_{1},\ldots,X_{n})\in\{-1,1\}^{n}$ be a uniformly distributed random variable in $\{-1,1\}^{n}$.  Let $Y=(Y_{1},\ldots,Y_{n})\in\{-1,1\}^{n}$ be a random variable such that $Y_{1},\ldots,Y_{n}$ are independent, and for each $1\leq i\leq n$, $Y_{i}=X_{i}$ with probability $\rho$, and with probability $1-\rho$, $Y_{i}\in\{-1,1\}$ is uniformly random and independent of $X_{1},\ldots,X_{n}$.  We define the \textbf{noise stability} of $f\colon\{-1,1\}\to\R$ with correlation $\rho$ to be
$$S_{\rho}(f)\colonequals\E_{\rho}f(X)f(Y).$$
\end{definition}
Note that $\E_{\rho} X_{i}Y_{i}=\rho$ and $Y_{i}=X_{i}$ with probability $(1+\rho)/2$ for each $1\leq i\leq n$.  Also, in the case that $f\colon\{-1,1\}^{n}\to\{-1,1\}$, we have
$$(1+S_{\rho}f)/2=\P_{\rho}(f(X)=f(Y)).$$

It is a standard exercise to deduce from the Central Limit Theorem that
$$
\lim_{n\to\infty}S_{\rho}(\mathrm{Maj}_{n})=\frac{2}{\pi}\sin^{-1}(\rho),\qquad\forall\,\rho\in[0,1].
$$
%
%

\begin{definition}[\embolden{Influence}]\label{infdef}
Let $f\colon\{-1,1\}^{n}\to\R$.  Fix $1\leq i\leq n$.  Denote $D_{i}f(x)\colonequals[f(x)-f(x_{1},\ldots,x_{i-1},-x_{i},x_{i+1},\ldots,x_{n})]/2$ for all $x\in\{-1,1\}^{n}$.  Define the \textbf{influence} of the $i^{th}$ variable on $f$ to be

$$\mathrm{Inf}_{i}(f)=\langle D_{i}f,D_{i}f\rangle=2^{-n}\sum_{x\in\{-1,1\}^{n}}(D_{i}f(x))^{2}.$$
That is, $\mathrm{Inf}_{i}(f)$ is the squared $L_{2}$ norm of the $i^{th}$ partial derivative of $f$.
\end{definition}
In the case $f\colon\{-1,1\}^{n}\to\{-1,1\}$, note that $D_{i}f\in\{-1,0,1\}$.  So, if $X_{1},\ldots,X_{n}$ are i.i.d. uniform in $\{-1,1\}$, we have
$$\mathrm{Inf}_{i}(f)=\P(f(X)\neq f(X_{1},\ldots,X_{i-1},-X_{i},X_{i+1},\ldots,X_{n})),\qquad\forall\,1\leq i\leq n.$$
\begin{remark}\label{infrk}  
For any $1\leq i\leq n$, we have $\mathrm{Var}_{X_{i}}f\colonequals\E_{X_{i}}(f-\E_{X_{i}}f)^{2}$, so since $f=\sum_{S\subset\{1,\ldots,n\}}\langle f,W_{S}\rangle W_{S}$, where $W_{S}(x)\colonequals\prod_{i\in S}x_{i}$ for all $S\subset\{1,\ldots,n\}$, and $D_{i}W_{S}=0$ if $i\notin S$ and $D_{i}W_{S}=W_{S}$ if $i\in S$, we get
$$\mathrm{Inf}_{i}(f)=\E \mathrm{Var}_{X_{i}}f,\qquad\forall\,1\leq i\leq n.$$
\end{remark}

The following Theorem was conjectured in \cite{khot07} and proven in \cite{mossel10}.

\begin{theorem}[\embolden{Majority is Stablest}, {\cite{mossel10}}]\label{misthm}
Let $0\leq\rho\leq 1$ and let $0<\epsilon<1$.  Then there exists a $\tau>0$ such that the following holds, for all integers $n\geq1$.  If $f\colon\{-1,1\}^{n}\to[-1,1]$ satisfies $\E f=0$ and $\max_{1\leq i\leq n}\mathrm{Inf}_{i}(f)\leq\tau$, then
$$S_{\rho}(f)\leq\lim_{n\to\infty}S_{\rho}(\mathrm{Maj}_{n})+\epsilon=\frac{2}{\pi}\sin^{-1}(\rho)+\epsilon.$$
\end{theorem}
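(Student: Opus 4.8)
The plan is to combine two ingredients: an \emph{invariance principle} that transfers the extremal problem from the discrete cube to Gaussian space, and \emph{Borell's isoperimetric inequality}, which solves the Gaussian problem exactly. The majority function itself enters only at the very end, to certify that the resulting bound is asymptotically attained.

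First I would set up the Fourier--Walsh expansion $f=\sum_{S\subset\{1,\ldots,n\}}\widehat f(S)W_{S}$, so that $S_{\rho}(f)=\sum_{S}\rho^{|S|}\widehat f(S)^{2}$, and make a preliminary truncation: replace $f$ by $f_{\leq d}\colonequals\sum_{|S|\leq d}\widehat f(S)W_{S}$ for a large constant $d=d(\rho,\epsilon)$. Because $\sum_{|S|>d}\rho^{|S|}\widehat f(S)^{2}\leq\rho^{d}\,\E f^{2}\leq\rho^{d}$, this changes $S_{\rho}(f)$ by at most $\epsilon/3$, and it does not increase the influences $\mathrm{Inf}_{i}$ by more than a fixed factor. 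So it suffices to bound $S_{\rho}$ of a degree-$d$, mean-zero, low-influence function taking values in (roughly) $[-1,1]$. Second, I would apply the invariance principle of \cite{mossel10}: letting $Q(y)=\sum_{|S|\leq d}\widehat f(S)\prod_{i\in S}y_{i}$ be the multilinear extension of $f_{\leq d}$ and $g=(g_{1},\ldots,g_{n})$ a vector of i.i.d.\ standard Gaussians, a Lindeberg-style coordinate-by-coordinate swap of each $\pm1$ input for a Gaussian one — with the error of a smooth proxy for $1_{[-1,1]}$ controlled through third moments, which hypercontractivity bounds by $C(d)\cdot\max_{i}\mathrm{Inf}_{i}(f)$ — shows that the laws of $Q(X)$ and $Q(g)$ are close uniformly in $n$. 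Running the same argument on the noise-correlated pair (equivalently, on $\rho$-correlated Gaussians $g'=\rho g+\sqrt{1-\rho^{2}}\,\tilde g$) gives $|S_{\rho}(f_{\leq d})-\mathrm{Stab}_{\rho}(\widetilde Q)|\leq\epsilon/3$, where $\mathrm{Stab}_{\rho}(\widetilde Q)\colonequals\E[\widetilde Q(g)\widetilde Q(g')]$ and $\widetilde Q$ is $Q$ truncated to $[-1,1]$ (the truncation is harmless since $Q(g)\in[-1,1]$ with high probability), and $\E\widetilde Q(g)$ is within $O(\epsilon)$ of $\E f=0$.

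Third, I would invoke Borell's inequality in the form: among all measurable $h\colon\R^{n}\to[-1,1]$ with $\int h\,\d\gamma_{n}=0$, the Gaussian noise stability $\E[h(g)h(g')]$ with $\rho\in[0,1]$ is maximized (for each fixed $\rho$) by a centered half-space $h(y)=\mathrm{sign}(y_{1})$, whose stability equals $1-\tfrac{2}{\pi}\cos^{-1}(\rho)=\tfrac{2}{\pi}\sin^{-1}(\rho)$ by Sheppard's formula. Applying this to $\widetilde Q$, after absorbing the $O(\epsilon)$ error coming from its nonzero mean, gives $\mathrm{Stab}_{\rho}(\widetilde Q)\leq\tfrac{2}{\pi}\sin^{-1}(\rho)+\epsilon/3$. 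Chaining the three estimates yields $S_{\rho}(f)\leq\tfrac{2}{\pi}\sin^{-1}(\rho)+\epsilon$. Finally, the Central Limit Theorem computation already recorded in the text, $\lim_{n\to\infty}S_{\rho}(\mathrm{Maj}_{n})=\tfrac{2}{\pi}\sin^{-1}(\rho)$, shows this cannot be improved, which is the displayed equality.

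The main obstacle is the invariance principle: making rigorous and quantitative the statement that a bounded-degree, low-influence multilinear polynomial has nearly the same distribution under Bernoulli and Gaussian inputs, and that this closeness survives both the comparison of noise-correlated copies and the (Lipschitz but nonsmooth) rounding to $[-1,1]$. This forces a careful choice of mollifications of $1_{[-1,1]}$ and of the telescoping, with every error term dominated by $C(d)\cdot(\max_{i}\mathrm{Inf}_{i}(f))^{c}$ via hypercontractivity, which converts high moments of a degree-$d$ polynomial into its variance. By comparison, Borell's inequality can be quoted (its own proof, via Gaussian rearrangement or the Ehrhard inequality, is nontrivial but standard), and the majority asymptotics are a routine exercise.
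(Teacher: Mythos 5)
Your proposal is correct and follows essentially the same route as the paper's own sketch (Section \ref{secmispf}): an invariance-principle transfer to Gaussian space, Borell's inequality for the extremal Gaussian problem, and the CLT computation for $\mathrm{Maj}_{n}$ to certify sharpness. The additional details you supply (degree truncation, Lindeberg swapping with hypercontractive moment control, mollification of the rounding) are exactly the ingredients of the proof in \cite{mossel10} that the paper cites rather than reproduces.
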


Theorem \ref{misthm} was deduced in \cite{mossel10} as a corollary of an invariance principle \cite[Theorem 2.1]{mossel10}, which can be understood as a nonlinear generalization of the Central Limit Theorem.
%

The invariance principle of \cite[Theorem 2.1]{mossel10} shows that the distribution of a multilinear polynomial $Q$ with small influences is almost the same when $Q$ has $\{-1-1\}$ inputs or Gaussian inputs.  A similar statement holds for the noise stability of a function itself, rather than its distribution \cite[Theorem 3.20]{mossel10}.
%

Then maximizing the noise stability of functions $f\colon\{-1,1\}^{n}\to[-1,1]$ in Theorem \ref{misthm} reduces to maximizing the noise stability of multilinear polynomials that are functions of i.i.d. Gaussian random variables.  That is, \cite[Theorem 3.20]{mossel10} implies the equivalence between an optimization over discrete functions, versus an optimization over functions on a continuous (real) domain.

Fortunately, the corresponding continuous problem was solved decades earlier, Theorem \ref{borthm} below.

\subsection{Borell's Inequality}

We define the Gaussian density as
\begin{equation}\label{zero0.0}
\begin{aligned}
\gamma_{n}(x)&\colonequals (2\pi)^{-n/2}e^{-\vnormt{x}^{2}/2},\qquad
\langle x,y\rangle\colonequals\sum_{i=1}^{n}x_{i}y_{i},\qquad
\vnormt{x}^{2}\colonequals\langle x,x\rangle,\\
&\qquad\forall\,x=(x_{1},\ldots,x_{n}),y=(y_{1},\ldots,y_{n})\in\R^{n}.
\end{aligned}
\end{equation}

Let $f\colon\R^{n}\to[0,1]$ be measurable and let $\rho\in(-1,1)$.  Define the \textbf{Ornstein-Uhlenbeck operator with correlation $\rho$} applied to $f$ by
\begin{equation}\label{oudef}
\begin{aligned}
T_{\rho}f(x)
&\colonequals\int_{\R^{n}}f(x\rho+y\sqrt{1-\rho^{2}})\gamma_{n}(y)\,\d y\\
&=(1-\rho^{2})^{-n/2}(2\pi)^{-(n)/2}\int_{\R^{n}}f(y)e^{-\frac{\vnorm{y-\rho x}^{2}}{2(1-\rho^{2})}}\,\d y,
\qquad\forall x\in\R^{n}.
\end{aligned}
\end{equation}

\begin{definition}[\embolden{Noise Stability}]\label{noisedef}
Let $\Omega\subset\R^{n}$ be measurable.  Let $\rho\in(-1,1)$.  We define the \textit{noise stability} of the set $\Omega$ with correlation $\rho$ to be
$$S_{\rho}(\Omega)\colonequals\int_{\R^{n}}1_{\Omega}(x)T_{\rho}1_{\Omega}(x)\gamma_{n}(x)\,\d x
\stackrel{\eqref{oudef}}{=}(2\pi)^{-n}(1-\rho^{2})^{-n/2}\int_{\Omega}\int_{\Omega}e^{\frac{-\|x\|^{2}-\|y\|^{2}+2\rho\langle x,y\rangle}{2(1-\rho^{2})}}\,\d x\d y.$$
Equivalently, if $X=(X_{1},\ldots,X_{n}),Y=(Y_{1},\ldots,Y_{n})\in\R^{n}$ are $n$-dimensional jointly Gaussian distributed random vectors with $\E X_{i}Y_{j}=\rho\cdot1_{(i=j)}$ for all $i,j\in\{1,\ldots,n\}$, then
$$S_{\rho}(\Omega)=\mathbb{P}((X,Y)\in \Omega\times \Omega).$$
\end{definition}

The following theorem is the continuous version of the Majority is Stablest Theorem \ref{misthm}.

\begin{theorem}[\embolden{Borell's Inequality}, \cite{borell75}]\label{borthm}
Let $\Omega\subset\R^{n}$ be measurable.  Let $H\subset\R^{n}$ be a half space (a set of points on one side of a hyperplane) such that $\gamma_{n}(\Omega)=\gamma_{n}(H)$.  Then
$$S_{\rho}(\Omega)\leq S_{\rho}(H).$$
\end{theorem}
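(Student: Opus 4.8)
The plan is to prove Borell's inequality by Gaussian symmetrization, in the spirit of Ehrhard's proof. The idea is to replace $\Omega$ by a set whose one-dimensional sections along a fixed direction are half-lines carrying the same one-dimensional Gaussian measure, to show this operation never decreases $S_{\rho}(\Omega)$, and then to iterate it over a dense family of directions so that the resulting sets converge to a half-space. It suffices to treat $\rho\in(0,1)$, since $\rho=0$ is immediate ($S_{0}(\Omega)=\gamma_{n}(\Omega)^{2}$) and $\rho\in(-1,0)$ follows by an analogous argument with the relevant monotonicity reversed; we may also assume $0<\gamma_{n}(\Omega)<1$, the inequality being trivial otherwise.

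First I would fix a unit vector $e\in\R^{n}$ and write each $x\in\R^{n}$ as $x=te+z$ with $t\in\R$ and $z\in e^{\perp}\cong\R^{n-1}$. For almost every $z$, set $\Omega_{z}\colonequals\{t\in\R:te+z\in\Omega\}$ and $m(z)\colonequals\gamma_{1}(\Omega_{z})$, and define the symmetrization
\[
\Omega^{*}\colonequals\bigl\{\,te+z:\ t\geq\Phi^{-1}(1-m(z))\,\bigr\},
\]
where $\Phi$ is the standard one-dimensional Gaussian cumulative distribution function; thus each section of $\Omega^{*}$ is the upper half-line of the same Gaussian measure as the corresponding section of $\Omega$, and $\gamma_{n}(\Omega^{*})=\gamma_{n}(\Omega)$ by Fubini's theorem. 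Now let $X,Y$ be standard Gaussian vectors in $\R^{n}$ with $\E X_{i}Y_{j}=\rho\,1_{\{i=j\}}$, so that $S_{\rho}(\Omega)=\P((X,Y)\in\Omega\times\Omega)$ as in Definition \ref{noisedef}. Writing $X=(X\cdot e)e+X'$ and $Y=(Y\cdot e)e+Y'$, one checks that $(X\cdot e,\,Y\cdot e)$ is a $\rho$-correlated standard Gaussian pair in $\R$, that $(X',Y')$ is a $\rho$-correlated pair in $e^{\perp}$, and that the two pairs are independent. Hence, with
\[
p_{\rho}(s,t)\colonequals(2\pi)^{-1}(1-\rho^{2})^{-1/2}\exp\!\Bigl(-\tfrac{s^{2}-2\rho st+t^{2}}{2(1-\rho^{2})}\Bigr)
\]
the joint density of a $\rho$-correlated standard Gaussian pair on $\R$, we obtain
\[
S_{\rho}(\Omega)=\E_{(X',Y')}\Bigl[\ \iint_{\Omega_{X'}\times\Omega_{Y'}}p_{\rho}(s,t)\,\d s\,\d t\ \Bigr],
\]
and the same identity holds for $\Omega^{*}$ with each $\Omega_{X'},\Omega_{Y'}$ replaced by the upper half-line of equal Gaussian measure. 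So the problem reduces to a one-variable inequality: for all measurable $A,B\subset\R$,
\[
\iint_{A\times B}p_{\rho}(s,t)\,\d s\,\d t\ \leq\ \iint_{A^{*}\times B^{*}}p_{\rho}(s,t)\,\d s\,\d t,
\]
where $A^{*}=[\Phi^{-1}(1-\gamma_{1}(A)),\infty)$ and $B^{*}=[\Phi^{-1}(1-\gamma_{1}(B)),\infty)$.

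To establish the one-variable inequality I would use that $p_{\rho}$ is totally positive of order two: for $s_{1}<s_{2}$ and $t_{1}<t_{2}$,
\[
\frac{p_{\rho}(s_{1},t_{1})\,p_{\rho}(s_{2},t_{2})}{p_{\rho}(s_{1},t_{2})\,p_{\rho}(s_{2},t_{1})}=\exp\!\Bigl(\tfrac{\rho}{1-\rho^{2}}\,(s_{2}-s_{1})(t_{2}-t_{1})\Bigr)\ \geq\ 1
\]
since $\rho>0$. It is a standard consequence of this property — via a two-point (polarization) rearrangement applied simultaneously to $A$ and $B$ and then iterated — that, among sets of prescribed Gaussian measure, $\iint_{A\times B}p_{\rho}$ is maximized by the upper half-lines $A^{*},B^{*}$, which is the displayed inequality. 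As a consistency check, once $B=B^{*}=[b,\infty)$ one has $\iint_{A\times B^{*}}p_{\rho}(s,t)\,\d s\,\d t=\int_{A}\Phi\bigl(\tfrac{\rho s-b}{\sqrt{1-\rho^{2}}}\bigr)\,\gamma_{1}(\d s)$, whose integrand is nondecreasing in $s$, so this quantity is indeed maximized over $\{\gamma_{1}(A)=\gamma_{1}(A^{*})\}$ by $A=A^{*}$.

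Combining the two previous paragraphs, $\Omega\mapsto\Omega^{*}$ preserves Gaussian measure and does not decrease $S_{\rho}$. Iterating this symmetrization along a suitable dense sequence of directions produces sets converging in $L^{1}(\gamma_{n})$ to a half-space $H$ with $\gamma_{n}(H)=\gamma_{n}(\Omega)$; since $S_{\rho}$ is continuous under this convergence and was never decreased along the iteration, $S_{\rho}(\Omega)\leq S_{\rho}(H)$, as claimed. I expect the main obstacle to be this convergence step together with the one-variable rearrangement inequality: verifying that the polarization iteration converges to the upper-half-line rearrangement in the plane, and that iterated Ehrhard symmetrizations in $\R^{n}$ converge to an honest half-space, requires the standard but somewhat delicate machinery of symmetrization, whereas the remaining ingredients are routine manipulations with Fubini's theorem and the decomposition of correlated Gaussian vectors. (An alternative to this entire scheme is the stochastic-calculus proof: couple two Ornstein--Uhlenbeck diffusions and show a suitable functional is monotone along the coupling, the crux there being the sign of a $2\times2$ Hessian.)
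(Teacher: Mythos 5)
The paper does not prove this statement; it is quoted as a known black-box result with a citation to Borell, so there is no in-paper argument to compare against. Your overall architecture (Ehrhard-type symmetrization: replace fibers in a fixed direction by half-lines of equal one-dimensional Gaussian measure, show this never decreases $S_{\rho}$, iterate over directions) is one of the classical routes to this theorem, and your reduction of the symmetrization step to the one-dimensional two-set inequality via the decomposition $X=(X\cdot e)e+X'$ is correct, as is the Fubini bookkeeping and the consistency check with $B=[b,\infty)$.

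The genuine gap is the one-dimensional inequality $\iint_{A\times B}p_{\rho}\leq\iint_{A^{*}\times B^{*}}p_{\rho}$ itself, which you dismiss as ``a standard consequence'' of total positivity via simultaneous two-point polarization. It is not: this inequality \emph{is} the two-set, one-dimensional case of Borell's theorem, and it is precisely where every known proof does its real work. The two-point rearrangement machinery you invoke requires a measure-preserving reflection of the underlying space (this is why it works on the circle or sphere with uniform measure, and why Isaksson--Mossel prove Borell's theorem by symmetrizing on $S^{n-1}$ and then passing to the Gaussian via the Poincar\'{e} limit); Gaussian measure on $\R$ admits no such reflections except about the origin, so polarizing $1_{A}$ toward $+\infty$ cannot be done while preserving $\gamma_{1}(A)$, and TP2 of the Mehler kernel alone does not rescue the argument (your check only shows that \emph{given} a half-line $B^{*}$ the optimal $A$ is a half-line, i.e., one best-response step; the joint optimization is the hard part). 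To close this gap you would need to import one of the actual proofs of the one-dimensional statement --- Ehrhard symmetrization together with Ehrhard's inequality, the spherical symmetrization plus Poincar\'{e}-limit argument, the Mossel--Neeman semigroup proof, or the stochastic-calculus proof you mention in your final parenthesis --- at which point the fibering reduction is largely superfluous, since those methods prove the $n$-dimensional statement directly. Separately, your parenthetical claim that $\rho\in(-1,0)$ ``follows by an analogous argument with the relevant monotonicity reversed'' should be deleted: for $\rho<0$ the inequality as displayed reverses (half-spaces \emph{minimize} $S_{\rho}$ among sets of fixed measure), and the theorem is only used in the paper for $\rho\in[0,1]$.
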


\subsection{Sketch of the Proof of Majority is Stablest}\label{secmispf}

It is more convenient to consider functions taking values in $[0,1]$.  Fix $\mu\in(0,1)$.  To obtain an analogous statement for functions $f\colon\{-1,1\}^{n}\to[-1,1]$ with $\E f=0$, just consider $(1+f)/2$ and apply the argument below with $\mu=1/2$.

The invariance principle \cite[Theorem 3.20]{mossel10} implies that the $f\colon\{-1,1\}^{n}\to[0,1]$ with $\E f=\mu$ with largest noise stability $S_{\rho}f$ is approximately the same as taking $\Omega\subset\R^{n}$ with $\gamma_{n}(\Omega)=\mu$ with the largest noise stability $S_{\rho}(\Omega)$.  By Theorem \ref{borthm}, the largest such set is a half space $H$.  Using \cite[Theorem 3.20]{mossel10} again, $S_{\rho}(H)$ is approximately the same as $S_{\rho}g$ where $g$ is a (discrete) half space of the form $g(x)=1_{x_{1}+\cdots+x_{n}>t}$, for some $t\in\R$, for all $x=(x_{1},\ldots,x_{n})\in\{-1,1\}^{n}$.

To recover Theorem \ref{misthm}, note that $2\cdot 1_{x_{1}+\cdots+x_{n}>0}-1=\mathrm{sign}(x_{1}+\cdots+x_{n})$.  For a full proof of a more general statement, see \cite[Theorem 4.4]{mossel10}. See also Chapter 11 of \cite{odonnell14}.

\section{Plurality is Stablest Conjecture}

We now generalize our discussion from voting methods with two candidates to voting methods with three or more candidates.

Let $k\geq2$ be an integer and define the $k$-simplex to be
$$\Delta_{k}\colonequals\{(y_{1},\ldots,y_{k})\in\R^{k}\colon y_{1}+\cdots+y_{k}=1,\, y_{i}\geq0,\,\forall\,1\leq i\leq k\}.$$
For any $1\leq i\leq k$, let $e_{i}\in\R^{k}$ be the vector with a $1$ in its $i^{th}$ entry and zeros in the other entries.

A function $f\colon\{1,\ldots,k\}^{n}\to\{e_{1},\ldots,e_{k}\}$ is a \textbf{voting method} with $n$ voters and $k$ candidates.  (We associate $e_{i}$ to candidate $i$, for each $1\leq i\leq k$.)  For any $x=(x_{1},\ldots,x_{n})\in\{1,\ldots,k\}^{n}$, we think of $x_{i}$ as the vote of person $1\leq i\leq n$ for candidate $x_{i}\in\{1,\ldots,k\}$.  Given the votes $x$, the winner of the election is $f(x)$.

In this section, we denote $\langle\cdot,\cdot\rangle$ as the standard inner product on $\R^{k}$.  We denote $f_{i}\colonequals\langle f,e_{i}\rangle$.  A function $f\colon\{1,\ldots,k\}^{n}\to\Delta_{k}$ can be also be considered a (random) voting method, so that $f_{i}(x)$ is the probability that $e_{i}$ is elected, for each $1\leq i\leq k$, given votes $x$.

\begin{definition}\label{nsdef2}
For any $x\in\Z^{\sdimn}$, we denote $\vnormt{x}_{0}$ as the number of nonzero coordinates of $x$.  The \textbf{noise stability} of $g\colon\{1,\ldots,k\}^{n}\to\R$ with parameter $\rho\in(-1,1)$ is
\begin{flalign*}
S_{\rho} g
&\colonequals k^{-\sdimn}\sum_{x\in\{1,\ldots,k\}^{\sdimn}} g(x)\E_{\rho} g(Y)\\
&=k^{-\sdimn}\sum_{x\in\{1,\ldots,k\}^{\sdimn}} g(x)\sum_{y\in\{1,\ldots,k\}^{\sdimn}}\left(\frac{1-(k-1)\rho}{m}\right)^{\sdimn-\vnormt{y-x}_{0}}
\left(\frac{1-\rho}{m}\right)^{\vnormt{y-x}_{0}} g(y).
\end{flalign*}
Equivalently, conditional on $X$, $\E_{\rho}g(Y)$ is defined so that for all $1\leq i\leq\sdimn$, $Y_{i}=X_{i}$ with probability $\frac{1+(k-1)\rho}{k}$, and $Y_{i}$ is equal to any of the other $(k-1)$ elements of $\{1,\ldots,k\}$ each with probability $\frac{1-\rho}{k}$, and so that $Y_{1},\ldots,Y_{\sdimn}$ are independent.

The \textbf{noise stability} of $f\colon\{1,\ldots,k\}^{\sdimn}\to\Delta_{k}$ with parameter $\rho\in(-1,1)$ is
$$S_{\rho}f\colonequals\sum_{i=1}^{k}S_{\rho}f_{i}= \E_{\rho}\langle f(X),f(Y)\rangle.$$
\end{definition}

When $f\colon\{1,\ldots,k\}^{n}\to\{e_{1},\ldots,e_{k}\}$, we have
$$S_{\rho}f=\P_{\rho}(f(X)=f(Y)).$$

Define the \textbf{plurality} function $\mathrm{Plur}_{k,\sdimn}\colon\{1,\ldots,k\}^{\sdimn}\to\Delta_{k}$ for $k$ candidates and $\sdimn$ voters such that for all $x\in\{1,\ldots,k\}^{\sdimn}$.
$$\mathrm{Plur}_{k,\sdimn}(x)
\colonequals\begin{cases}
e_{j}&,\mbox{if }\abs{\{i\in\{1,\ldots,k\}\colon x_{i}=j\}}>\abs{\{i\in\{1,\ldots,k\}\colon x_{i}=r\}},\\
&\qquad\qquad\qquad\qquad\forall\,r\in\{1,\ldots,k\}\setminus\{j\}\\
\frac{1}{k}\sum_{i=1}^{k}e_{i}&,\mbox{otherwise}.
\end{cases}
$$

\begin{example}
When $k=2$, the Definition \ref{nsdef2} says $Y_{i}=X_{i}$ with probability $(1+\rho)/2$, and $Y_{i}\neq X_{i}$ with probability $(1-\rho)/2$, agreeing with Definition \ref{nsdisdef} (though the values that $X_{i},Y_{i}$ take are different in these two definitions).  Also, under our new definition of noise stability we have $S_{\rho}f=\P_{\rho}(f(X)=f(Y))$, so that $\lim_{n\to\infty}S_{\rho}\mathrm{Maj}_{n}=[1+(2/\pi)\sin^{-1}\rho]/2$.
\end{example}

Let $X_{1},\ldots,X_{n}$ be i.i.d. uniform in $\{1,\ldots,k\}$.  Recalling Remark \ref{infrk}, we define the \textbf{influence} of voter $1\leq i\leq n$ on $g\colon\{1,\ldots,k\}^{n}\to\R$ to be
\begin{equation}\label{infeq}
\mathrm{Inf}_{i}(g)=\E \mathrm{Var}_{X_{i}}g.
\end{equation}

\begin{conj}[\embolden{Plurality is Stablest Conjecture}]\label{pis}
For any $k\geq2$, $\rho\in[0,1]$, $\epsilon>0$, there exists $\tau>0$ such that, for all $n\geq1$, if $f\colon\{1,\ldots,k\}^{\sdimn}\to\Delta_{k}$ satisfies $\mathrm{Inf}_{i}(f_{j})\leq\tau$ for all $1\leq i\leq\sdimn$ and for all $1\leq j\leq k$, and if $\E f=\frac{1}{k}\sum_{i=1}^{k}e_{i}$, then
$$
S_{\rho}f\leq \lim_{\sdimn\to\infty}S_{\rho}\mathrm{Plur}_{k,\sdimn}+\epsilon.
$$
\end{conj}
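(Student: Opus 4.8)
\medskip

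\noindent\textbf{Proof proposal.} Conjecture \ref{pis} is the Plurality is Stablest Conjecture, which is open for $k\ge 3$; what follows is a strategy---the one behind the partial results of \cite{heilman20d}---rather than a complete proof. The plan mirrors the deduction of Theorem \ref{misthm} from Borell's inequality (Theorem \ref{borthm}). First I would apply a vector-valued invariance principle in the spirit of \cite[Theorem 3.20]{mossel10} to the tuple of multilinear polynomials $(f_1,\ldots,f_k)$ representing $f\colon\{1,\ldots,k\}^{\sdimn}\to\Delta_k$: when all $\mathrm{Inf}_i(f_j)\le\tau$, the noise stability $S_\rho f=\sum_{i=1}^k S_\rho f_i$ changes by at most an error tending to $0$ as $\tau\to 0$ when the uniform inputs are replaced by i.i.d.\ Gaussians. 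Since a $\Delta_k$-valued function that is close (in $L_2$) to $\{e_1,\ldots,e_k\}$-valued can be rounded to an honest partition---and the hypothesis $\E f=\frac1k\sum_{i=1}^k e_i$ keeps the rounded pieces balanced---the upshot is that Conjecture \ref{pis} reduces to the continuous statement: for every $m\ge 1$ and every measurable partition $(\Omega_1,\ldots,\Omega_k)$ of $\R^m$ with $\gamma_m(\Omega_i)=1/k$, the quantity $\sum_{i=1}^k S_\rho(\Omega_i)$ is at most its value for the \emph{simplicial partition} $(C_1,\ldots,C_k)$, namely $k$ congruent cones emanating from the origin inside a $(k-1)$-dimensional subspace with axes pointing to the vertices of a regular simplex; this value equals $\lim_{\sdimn\to\infty}S_\rho\mathrm{Plur}_{k,\sdimn}$. (For $k=2$ the simplicial partition is a pair of opposite half-spaces, and this is exactly Borell's inequality.)

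Second, to prove this Gaussian ``standard simplex'' statement I would: (a) run a dimension-reduction argument, as in the result underlying Theorem \ref{bisweak}, to reduce to partitions of $\R^{k-1}$; (b) show the optimal partition has flat interfaces (hyperplanes through the origin) by a calculus-of-variations argument---a first-variation (Euler--Lagrange) condition forcing the interfaces to have constant ``noise curvature,'' followed by a second-variation computation ruling out curved interfaces; and (c) among conical partitions of $\R^{k-1}$ into $k$ equal-measure cones, optimize over the cone geometry and show the regular configuration is best, e.g.\ by a convexity or symmetrization argument in the angular parameters. Pushing the optimal value back through the invariance principle then gives $S_\rho f\le\lim_{\sdimn\to\infty}S_\rho\mathrm{Plur}_{k,\sdimn}+\epsilon$.

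The hard part is step (b). For $k=2$, Borell's inequality admits short proofs (Ehrhard symmetrization, or two-point/semigroup methods) because any set can be monotonically deformed toward a half-space; for $k\ge 3$ the simplicial partition is not monotone in any single direction, no analogous symmetrization is known, and this is precisely why the conjecture is still open. What \emph{is} provable by this route---and what I would settle for---is the small-$\rho$ regime. Writing $S_\rho(\Omega)=\gamma_m(\Omega)^2+\rho\,\vnormt{P_1 1_{\Omega}}_2^2+O(\rho^2)$, where $P_1$ is the projection onto the span of the degree-one Hermite polynomials, the first-order term $\sum_{i=1}^k\vnormt{P_1 1_{\Omega_i}}_2^2$ is a tractable quadratic functional that is maximized, over balanced partitions, by the simplicial partition; a quantitative version with explicit control of the $O(\rho^2)$ remainder yields Conjecture \ref{pis} for all $\rho$ with $\abs{\rho}\le\rho_k$ for some $\rho_k>0$ depending only on $k$. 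This is the content of \cite{heilman20d}, and via Theorem \ref{thm0} it transfers to a corresponding range of corruption probabilities in the Borda count is Stablest conjecture; covering the full range $\rho\in[0,1]$ would require either a proof of the standard simplex conjecture in Gaussian space or a genuinely new reduction.
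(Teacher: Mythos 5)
This statement is a conjecture; the paper offers no proof of it, and you are right to say so. Your outline matches the paper's own framing exactly: the invariance principles of \cite[Theorems 3.3, 3.4, 3.6]{isaksson11} reduce Conjecture \ref{pis} to the Gaussian Standard Simplex Conjecture \ref{conj2}, the dimension-reduction Theorem \ref{pisc} cuts the problem down to $\R^{k-1}$, and a perturbative analysis in $\rho$ handles the small-correlation regime. One correction of fact: you assert that the small-$\rho$ case is settled for every $k$ with some $\rho_k>0$ ``depending only on $k$,'' but the paper's Corollary after Theorem \ref{pisc} (and the result of \cite{heilman20d}) establishes this only for $k=3$; for $k\geq4$ even the small-$\rho$ regime is open, essentially because the first-order functional $\sum_{i}\vnormt{P_{1}1_{\Omega_{i}}}_{2}^{2}$ is only known to be maximized by the simplicial partition when $k=3$ (this is a propeller-type statement). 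Also, your step (b) --- deriving flat interfaces from first- and second-variation conditions --- is precisely the step nobody knows how to carry out for $k\geq3$, and you correctly flag it as the obstruction; no further claim of a complete proof is being made, so there is nothing to repair beyond the $k=3$ caveat.
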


Analogues of the invariance principles \cite[Theorem 2.1,Theorem 3.20]{mossel10} apply in this more general setting.  See \cite[Theorem 3.3, 3.4,3.6]{isaksson11}.  These invariance principles imply that Conjecture \ref{pis} is equivalent to Conjecture \ref{conj2} stated below.

\begin{prob}[\embolden{Standard Simplex Problem}, {\cite{isaksson11}}]\label{prob2}
Let $k\geq3$.  Fix $\rho\in(0,1)$.  Find measurable sets $\Omega_{1},\ldots\Omega_{k}\subset\R^{n}$ with $\cup_{i=1}^{k}\Omega_{i}=\R^{n}$ and $\gamma_{n}(\Omega_{i})=1/k$ for all $1\leq i\leq k$ that maximize
$$\sum_{i=1}^{k}\int_{\R^{n}}1_{\Omega_{i}}(x)T_{\rho}1_{\Omega_{i}}(x)\gamma_{n}(x)\,\d x,$$
subject to the above constraints.
\end{prob}

\begin{conj}[\embolden{Standard Simplex Conjecture} {\cite{isaksson11}}]\label{conj2}
Let $\Omega_{1},\ldots\Omega_{k}\subset\R^{n}$ maximize Problem \ref{prob2}.  Assume that $k-1\leq n$.  Fix $\rho\in(0,1)$.  Let $z^{(1)},\ldots,z^{(k)}\in\R^{n}$ be the vertices of a regular simplex in $\R^{n}$ centered at the origin.  Then for all $1\leq i\leq m$,
\begin{equation}\label{ome1}
\Omega_{i}=\{x\in\R^{n}\colon\langle x,z^{(i)}\rangle=\max_{1\leq j\leq k}\langle x,z^{(j)}\rangle\}.
\end{equation}
\end{conj}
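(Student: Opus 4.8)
Since Conjecture~\ref{conj2} is, to my knowledge, open for every $k\ge 3$, what follows is the program I would pursue together with the regimes I believe it already settles. \emph{Step 1: dimension reduction to a compact problem.} By the principle underlying Theorem~\ref{pisweak} (continuous form in \cite{heilman20d}), a maximizing partition for Problem~\ref{prob2} may be taken to depend on at most $k-1$ linear functionals of $x$, so after an orthogonal change of coordinates we may assume $n=k-1$ and are facing a compact finite-dimensional variational problem. A maximizer exists because $T_{\rho}$ is a compact operator on $L^{2}(\gamma_{k-1})$ for $\rho\in(0,1)$, hence the quadratic functional $\sum_{i}\langle 1_{\Omega_{i}},T_{\rho}1_{\Omega_{i}}\rangle_{\gamma_{k-1}}$ is weakly continuous on the weakly compact set of fuzzy partitions obeying the $k$ measure constraints; and since that functional is convex ($T_{\rho}\succeq 0$), a standard rounding argument shows the maximum is attained at an honest (non-fuzzy) partition. \emph{Step 2: first-order conditions and local maximality.} At a maximizer the pairwise interfaces $\partial\Omega_{i}\cap\partial\Omega_{j}$ are smooth and lie in the level set $\{T_{\rho}1_{\Omega_{i}}-T_{\rho}1_{\Omega_{j}}=\lambda_{i}-\lambda_{j}\}$, where the $\lambda$'s are the multipliers for the measure constraints, triple and higher junctions meet along lower-dimensional sets at fixed angles (for $k=3$, interfaces meet in threes at $120^{\circ}$, exactly as in \eqref{ome1}), and the second variation is nonpositive. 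One then checks that the simplicial cone \eqref{ome1} is a critical point --- immediate from the rotational symmetry of the regular simplex --- and computes its second variation in the Hermite expansion of the cell indicators, which should confirm that it is a strict local maximizer modulo the symmetry group of the simplex.

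\emph{The two endpoints in $\rho$.} For $\rho$ near $0$, expanding $1_{\Omega_{i}}=\tfrac1k+\sum_{d\ge1}(1_{\Omega_{i}})_{d}$ in Hermite polynomials gives $\sum_{i}S_{\rho}(\Omega_{i})=\tfrac1k+\rho\sum_{i}\vnormt{(1_{\Omega_{i}})_{1}}^{2}+O(\rho^{2})$, so to leading order one must partition $\R^{k-1}$ into $k$ sets of Gaussian measure $1/k$ maximizing the total first-degree weight $\sum_{i}\vnormt{\E[Z\,1_{\Omega_{i}}]}^{2}$, $Z$ standard Gaussian. For $k=3$ this is exactly the Propeller Conjecture of Khot--Naor, solved by Heilman--Jagannath--Naor with the $120^{\circ}$ propeller (the simplicial partition) as optimizer; since $S_{\rho}$ depends smoothly on the partition and on $\rho$ and the leading term is maximized strictly modulo symmetries, a perturbation argument then gives Conjecture~\ref{conj2} for $k=3$ and all small $\rho$ --- equivalently, via the invariance principles of \cite{mossel10,isaksson11}, the Plurality is Stablest Conjecture, and then via Theorem~\ref{thm0} the Borda count is Stablest Conjecture, in that regime. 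For $\rho$ near $1$, since $\tfrac1k-\sum_{i}S_{\rho}(\Omega_{i})$ is asymptotically a constant times $\sqrt{1-\rho}$ times the total Gaussian perimeter $\sum_{i}\mathrm{Per}_{\gamma_{k-1}}(\Omega_{i})$, maximizing noise stability degenerates to the Gaussian multi-bubble problem, whose solution is the simplicial-cone cluster by the theorem of Milman--Neeman (the relevant count $k\le n+1$ is exactly met after Step~1), and a $\Gamma$-convergence argument should promote this to Conjecture~\ref{conj2} for all $k$ and all $\rho$ near $1$. What is missing is every $k\ge4$ at small $\rho$ (there the leading-order statement is itself open) and the whole intermediate range of $\rho$.

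\emph{The main obstacle.} Closing the intermediate range of $\rho$, and handling $k\ge4$ at small $\rho$, is where the genuine difficulty lies. Knowing the inequality at both endpoints does not interpolate: writing $S_{\rho}=\sum_{d}\rho^{d}w_{d}$, the desired domination of these power series is not coefficientwise. Unlike Borell's inequality (Theorem~\ref{borthm}), which is proved by Ehrhard symmetrization, there is no known rearrangement that preserves both the $k$-part structure and all $k$ measure constraints while increasing $S_{\rho}$, and since $S_{\rho}$ is convex --- not concave --- in the partition, the maximum admits no soft localization and competing stable critical configurations cannot be excluded for free (as they genuinely arise in the companion minimization problems). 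The route I would take is a classification: show that \emph{every} stable critical partition of $\R^{k-1}$ for the $S_{\rho}$ functional is a simplicial cone, by an ``angle rigidity'' argument in the spirit of the Milman--Neeman multi-bubble proof but with the Gaussian-perimeter functional replaced by $\langle 1_{\Omega_{i}},T_{\rho}1_{\Omega_{i}}\rangle$ at a fixed $\rho\in(0,1)$. Transferring that rigidity from the local (perimeter) setting to the genuinely nonlocal noise-stability functional is the crux, and I do not see how to carry it out with current techniques.
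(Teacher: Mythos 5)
This statement is a conjecture, and the paper offers no proof of it: the only things the paper records in its direction are the dimension-reduction result (Theorem \ref{pisc}) and the corollary from \cite{heilman20d} that the conjecture holds for $k=3$ and all sufficiently small $\rho$. You correctly recognize this, and your outline of what is known matches those citations: dimension reduction to $\R^{k-1}$, existence of a maximizer via compactness of $T_{\rho}$ and convexity of the functional, first-variation conditions on the interfaces, and the $k=3$, small-$\rho$ case via the degree-one (propeller-type) term in the Hermite expansion. Your identification of the genuine obstruction --- that endpoint information in $\rho$ does not interpolate, that no Ehrhard-type symmetrization is available for $k$-part partitions, and that convexity of $S_{\rho}$ in the partition prevents soft localization --- is accurate and is essentially why the problem remains open for $k\geq 3$ at general $\rho$.

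Two small corrections to the parts you do assert. In the $\rho\to 1$ asymptotics the deficit should be $1-\sum_{i}S_{\rho}(\Omega_{i})$, not $\tfrac1k-\sum_{i}S_{\rho}(\Omega_{i})$, since $S_{\rho}(\Omega_{i})\to\gamma_{k-1}(\Omega_{i})=1/k$ for each $i$ and the sum tends to $1$; the statement that this deficit is asymptotic to a constant times $\sqrt{1-\rho}$ times the total Gaussian perimeter is then the right heuristic, but the claimed promotion of the Milman--Neeman multi-bubble theorem to the conjecture for $\rho$ near $1$ via $\Gamma$-convergence is not in the literature and would itself require a uniform second-order control that nobody has; you hedge appropriately, but it should not be counted as a settled regime. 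With those caveats, your proposal is an honest and well-informed research program for an open problem rather than a proof, and there is nothing in the paper to compare it against.
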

%
%

At present, the only general result for Conjecture \ref{conj2} is that the optimal sets $\Omega_{1},\ldots,\Omega_{k}$ are low-dimensional, in the following sense.

\begin{theorem}[{\embolden{Dimension Reduction}, \cite{heilman20d}}]\label{pisc}
Let $\Omega_{1},\ldots\Omega_{k}\subset\R^{n}$ maximize Problem \ref{prob2}.  Assume that $k-1\leq n$.  Fix $\rho\in(0,1)$.  Then, after rotating $\Omega_{1},\ldots,\Omega_{k}$ if necessary, there exist measurable sets $\Omega_{1}',\ldots,\Omega_{k}'\subset\R^{k-1}$ such that
$$\Omega_{i}=\Omega_{i}'\times\R^{n-k+1},\qquad\forall\,1\leq i\leq k.$$
\end{theorem}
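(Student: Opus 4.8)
The plan is to run a ``soft'' analogue of the $k=2$ case of the statement, which \emph{is} Borell's inequality (Theorem \ref{borthm}): a maximizing bipartition of $\R^{n}$ with $\gamma_{n}(\Omega_{1})=\gamma_{n}(\Omega_{2})=1/2$ is a halfspace, hence a cylinder over a halfline in $\R^{1}=\R^{k-1}$. For general $k$ I would smooth the indicators by the Ornstein--Uhlenbeck flow, extract first-order optimality conditions, and show the smoothed functions — hence the sets — are constant in a codimension-$(k-1)$ family of directions. Note first that since $\cup_{i}\Omega_{i}=\R^{n}$, $\gamma_{n}(\Omega_{i})=1/k$, and there are $k$ sets, the $\Omega_{i}$ are pairwise disjoint up to $\gamma_{n}$-null sets, i.e.\ a genuine partition. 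Writing $\rho=\sigma^{2}$ with $\sigma\in(0,1)$ and using $T_{\sigma}^{2}=T_{\rho}$ together with the self-adjointness of $T_{\sigma}$ on $L^{2}(\gamma_{n})$, the objective in Problem \ref{prob2} equals $\sum_{i=1}^{k}\vnormt{T_{\sigma}1_{\Omega_{i}}}_{L^{2}(\gamma_{n})}^{2}$.

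Next I would derive the Euler--Lagrange conditions for a maximizer $\{\Omega_{i}\}$. Transporting an infinitesimal amount of Gaussian mass from $\Omega_{i}$ to $\Omega_{j}$ near a point $x$ changes the (Lagrangian) objective to first order by a positive multiple of $[u_{j}(x)-\lambda_{j}]-[u_{i}(x)-\lambda_{i}]$, where $u_{i}\colonequals T_{\rho}1_{\Omega_{i}}$ and $\lambda_{i}\in\R$ is the multiplier of the constraint $\gamma_{n}(\Omega_{i})=1/k$; at a maximizer this quantity is nonpositive whenever $x\in\Omega_{i}$, which gives, up to a $\gamma_{n}$-null set,
$$\Omega_{i}=\{x\in\R^{n}\colon u_{i}(x)-\lambda_{i}\geq u_{j}(x)-\lambda_{j}\ \text{ for all }1\leq j\leq k\},\qquad 1\leq i\leq k.$$
Moreover each $u_{i}$ is real-analytic on $\R^{n}$: one differentiates the Gaussian-kernel representation of $T_{\rho}$ in \eqref{oudef} under the integral sign, which is legitimate since $1_{\Omega_{i}}$ is bounded and the kernel decays like a Gaussian. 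Finally $\sum_{i=1}^{k}u_{i}\equiv T_{\rho}1\equiv1$.

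Now comes the dimension reduction. Let $V\subset\R^{n}$ be the span of $\{\nabla u_{i}(x):1\leq i\leq k,\ x\in\R^{n}\}$, equivalently the smallest subspace such that every $u_{i}$ factors through the orthogonal projection $P_{V}$ onto $V$. Differentiating $\sum_{i}u_{i}\equiv1$ gives $\sum_{i}\nabla u_{i}(x)=0$ for every $x$, so the $k$ gradients $\nabla u_{1}(x),\dots,\nabla u_{k}(x)$ span a subspace of dimension at most $k-1$ \emph{at each fixed $x$}. The crux of the proof is to upgrade this pointwise bound to the global statement $\dim V\leq k-1$. Granting it: each $u_{i}$ is constant along $V^{\perp}$, so by the Euler--Lagrange description $\Omega_{i}$ is invariant under translation by $V^{\perp}$, i.e.\ (after modifying on a null set) $\Omega_{i}=\Omega_{i}''\times V^{\perp}$ with $\Omega_{i}''\subset V$ closed; enlarging $V$ to a subspace of dimension exactly $k-1$ (possible as $k-1\leq n$) and rotating it onto $\R^{k-1}\times\{0\}$ yields $\Omega_{i}=\Omega_{i}'\times\R^{n-k+1}$ with $\Omega_{i}'\subset\R^{k-1}$, as desired.

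The main obstacle is precisely this globalization: the pointwise gradient bound by itself is far from sufficient, since a radially symmetric function has one-dimensional gradient span at every point yet depends on all $n$ coordinates. To close the gap I would exploit the self-referential structure $u_{i}=T_{\rho}1_{\Omega_{i}}$ together with the fact that $\Omega_{i}$ is reconstructed from the $u_{j}$'s via the Euler--Lagrange relation: if $W$ is the minimal subspace through which all $u_{i}$ factor, then the $\Omega_{i}$ factor through $W$ too, integrating the Ornstein--Uhlenbeck kernel over $W^{\perp}$ against $1_{\Omega_{i}}$ reproduces $u_{i}$, and one plays this against optimality — e.g.\ that applying a rotation of $\R^{n}$ supported on $W$ and re-optimizing cannot strictly increase the objective — to force $\dim W\leq k-1$. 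Carrying this out rigorously requires handling the $\gamma_{n}$-null-set ambiguity in the Euler--Lagrange relation, the regularity of the interfaces $\{u_{i}-u_{j}=\lambda_{i}-\lambda_{j}\}$ (real-analytic hypersurfaces unless $u_{i}-u_{j}$ is constant, which signals that parts $i,j$ have effectively merged), and a careful choice of admissible perturbation. A possibly cleaner alternative is to prove directly that the supremum $M_{n}$ in Problem \ref{prob2} satisfies $M_{n}=M_{k-1}$ for all $n\geq k-1$, via a coupling or rearrangement argument in the last $n-k+1$ coordinates showing those dimensions never help, and then deduce the cylindrical structure of maximizers from the equality case.
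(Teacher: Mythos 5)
Your setup is sound and matches the standard first steps: the first-variation (Euler--Lagrange) description $\Omega_{i}=\{x\colon u_{i}(x)-\lambda_{i}\geq u_{j}(x)-\lambda_{j}\ \forall j\}$ with $u_{i}=T_{\rho}1_{\Omega_{i}}$, the analyticity of the $u_{i}$, the identity $\sum_{i}u_{i}\equiv 1$, and the reduction of the theorem to showing that the common factoring subspace of the $u_{i}$ has dimension at most $k-1$. You also correctly diagnose that the pointwise bound $\mathrm{rank}\{\nabla u_{1}(x),\ldots,\nabla u_{k}(x)\}\leq k-1$ is useless on its own (your radial example is exactly the right counterexample). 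But the proof stops there: the globalization step, which is the entire content of the theorem, is not carried out, and neither of your two suggested routes works as stated. Rotations give no information at all, because Problem \ref{prob2} is invariant under the full orthogonal group --- rotating a maximizing partition produces another partition with identical Gaussian measures and identical noise stability, so ``applying a rotation and re-optimizing cannot increase the objective'' is vacuous. The ``coupling or rearrangement in the last $n-k+1$ coordinates'' alternative is not a construction; indeed no symmetrization is known for this functional for $k\geq 3$ (symmetrization is essentially what proves Borell's inequality at $k=2$, and its failure for $k\geq3$ is why the Standard Simplex Conjecture is open).

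The missing idea --- which the paper itself telegraphs in its proof of Theorem \ref{bisdim}, where the key objects are the translated sets $\Omega_{1}+v,\ldots,\Omega_{k}+v$ --- is a \emph{second}-variation argument along translations. For $v\in\R^{n}$ one perturbs the whole partition by $\Omega_{i}\mapsto\Omega_{i}+tv$; this preserves the partition property, and a Cameron--Martin computation shows it preserves each constraint $\gamma_{n}(\Omega_{i})=1/k$ to first order precisely when $v$ is orthogonal to every Gaussian barycenter $b_{i}\colonequals\int_{\Omega_{i}}x\,\gamma_{n}(x)\,\d x$. Since $\sum_{i}b_{i}=0$, these barycenters span a subspace of dimension at most $k-1$, so the admissible $v$ form a subspace of dimension at least $n-k+1$. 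One then computes the second derivative at $t=0$ of the (Lagrangian-corrected) objective along such a $v$ and shows it is strictly positive unless $\langle v,N\rangle=0$ almost everywhere on each reduced boundary $\partial^{*}\Omega_{i}$, i.e.\ unless every $\Omega_{i}$ is a cylinder in the direction $v$; maximality therefore forces the cylindrical structure in all $n-k+1$ admissible directions. This positivity of the translational second variation is the key lemma of \cite{heilman20d}, and it is exactly the piece your argument lacks; without it (or a substitute of comparable strength) the proof does not close.
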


\begin{cor}[{\cite{heilman20d}}]
Let $k=3$.  Then there exists $\rho_{0}>0$ such that Conjecture \ref{conj2} is true for all $0<\rho<\rho_{0}$.
\end{cor}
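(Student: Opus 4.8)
The plan is to use the Dimension Reduction theorem (Theorem \ref{pisc}) to pass to the plane, expand the objective of Problem \ref{prob2} as a power series in $\rho$, recognize the leading ($\rho^{1}$) term as the functional of the Propeller Conjecture — which is solved for three parts — and then run a perturbative argument to transfer optimality to the full objective when $\rho$ is small. Since $k=3$, Theorem \ref{pisc} says every maximizer $\Omega_{1},\Omega_{2},\Omega_{3}\subset\R^{n}$ of Problem \ref{prob2} (with $n\geq 2$) is, after a rotation, $\Omega_{i}=\Omega_{i}'\times\R^{n-2}$ with $\Omega_{i}'\subset\R^{2}$. As $T_{\rho}$ and $\gamma_{n}$ factor over $\R^{2}\times\R^{n-2}$ and $\gamma_{n-2}(\R^{n-2})=1$, the constraints $\gamma_{n}(\Omega_{i})=1/3$ and the objective $F_{\rho}(\Omega):=\sum_{i=1}^{3}\int_{\R^{n}}1_{\Omega_{i}}(x)\,T_{\rho}1_{\Omega_{i}}(x)\,\gamma_{n}(x)\,\d x$ (equivalently $\sum_{i}S_{\rho}(\Omega_{i})$ in the notation of Definition \ref{noisedef}) depend only on the sections $\Omega_{i}'$; likewise, for $k=3$, the conjectured optimizer \eqref{ome1} is the product with $\R^{n-2}$ of the planar \emph{propeller} $P=(P_{1},P_{2},P_{3})$, namely three $120^{\circ}$ sectors sharing the origin, each of $\gamma_{2}$-measure $1/3$. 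So it suffices to prove Conjecture \ref{conj2} for $n=2$.

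Decomposing $1_{\Omega_{i}}\in L^{2}(\gamma_{2})$ into Hermite components $1_{\Omega_{i}}=\sum_{d\geq 0}f_{i}^{=d}$ and using $T_{\rho}f_{i}^{=d}=\rho^{d}f_{i}^{=d}$ gives $F_{\rho}(\Omega)=\sum_{d\geq 0}\rho^{d}W_{d}(\Omega)$ with $W_{d}(\Omega):=\sum_{i=1}^{3}\vnorm{f_{i}^{=d}}_{L^{2}(\gamma_{2})}^{2}\geq 0$. Here $W_{0}(\Omega)=\sum_{i}\gamma_{2}(\Omega_{i})^{2}=1/3$ is fixed, $\sum_{d\geq 0}W_{d}(\Omega)=\sum_{i}\gamma_{2}(\Omega_{i})=1$ (so $\sum_{d\geq 2}\rho^{d}W_{d}(\Omega)\leq\rho^{2}$ for $0<\rho<1$), and, since the degree-one Hermite polynomials on $\R^{2}$ are the coordinate functions, $W_{1}(\Omega)=\sum_{i=1}^{3}\vnorm{\int_{\Omega_{i}}x\,\d\gamma_{2}(x)}^{2}$. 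This last quantity is exactly the functional of the Propeller Conjecture, and by its solution for three parts (Heilman--Jagannath--Naor; see \cite{heilman20d} and the references therein) the propeller $P$ is, up to rotation and modification on null sets, the unique maximizer of $W_{1}$ over all partitions of $\R^{2}$ into three measurable sets; a fortiori $W_{1}(\Omega)\leq W_{1}(P)$ for every admissible $\Omega$. Combined with $F_{\rho}(P)\geq\tfrac13+\rho W_{1}(P)$, this yields the sandwich $\tfrac13+\rho W_{1}(P)\leq F_{\rho}(P)\leq\max_{\Omega}F_{\rho}(\Omega)\leq\tfrac13+\rho W_{1}(P)+\rho^{2}$.

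It remains to upgrade near-optimality of $P$ to exact optimality for small $\rho$, which is the hard part. Fix a small $\delta>0$. If $W_{1}(\Omega)\leq W_{1}(P)-\delta$, the sandwich gives $F_{\rho}(\Omega)\leq\tfrac13+\rho W_{1}(P)-\rho\delta+\rho^{2}<F_{\rho}(P)$ once $\rho<\delta$, so such $\Omega$ cannot be optimal; hence any maximizer lies in the regime $W_{1}(\Omega)>W_{1}(P)-\delta$, where I would invoke a \emph{quantitative} form of the Propeller Theorem: a modulus $\omega$ with $\omega(0^{+})=0$ such that $W_{1}(\Omega)>W_{1}(P)-\delta$ forces $\min_{R\in O(2)}\sum_{i}\gamma_{2}\big(\Omega_{i}\,\triangle\,(RP)_{i}\big)\leq\omega(\delta)$, obtained from the uniqueness above together with a compactness argument for equal-measure partitions of $\R^{2}$. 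On a fixed neighborhood of the rotation orbit of $P$ one then argues that $F_{\rho}$ is maximized only on the orbit: $P$ is a critical point of $F_{\rho}$ by symmetry, its Hessian equals $\rho\,\mathrm{Hess}(W_{1})+O(\rho^{2})$, and $\mathrm{Hess}(W_{1})$ at $P$ is negative definite transverse to rotations (part of the propeller analysis), so for $\rho$ small the Hessian of $F_{\rho}$ is negative definite transverse to the orbit, uniformly. Choosing $\delta$ so that $\omega(\delta)$ lies inside the neighborhood of validity and then $\rho_{0}>0$ small enough to serve both regimes identifies the maximizers with the rotations of $P$ for all $0<\rho<\rho_{0}$, which is Conjecture \ref{conj2}. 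I expect the main obstacles to be exactly these last points: (i) extracting the stability/quantitative version of the Propeller Theorem, since bare uniqueness does not control the rate and mass can escape to infinity under weak limits of partitions; and (ii) making the second-variation analysis at $P$ uniform in $\rho$ — checking that the $O(\rho^{2})$ corrections do not destroy negative-definiteness transverse to the one-parameter rotation orbit, and that the neighborhood on which this holds is independent of $\rho$ — and then reconciling the length scale $\omega(\delta)$ with this neighborhood and with the choice of $\rho_{0}$.
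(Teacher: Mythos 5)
The paper states this corollary without proof, citing \cite{heilman20d}, so the comparison is against that reference. Your skeleton is the right one and matches it: reduce to $n=2$ via Theorem \ref{pisc}, expand $S_{\rho}$ in the Ornstein--Uhlenbeck eigenbasis so that the degree-$0$ term is pinned at $1/3$ by the measure constraints, identify the degree-$1$ term with the planar propeller functional $\sum_{i}\vnormt{\int_{\Omega_{i}}x\,\d\gamma_{2}(x)}^{2}$, whose unique maximizer (up to rotation and null sets) among equal-measure three-part partitions is the $120^{\circ}$ propeller by Khot--Naor, and bound the tail by $\rho^{2}$. Your ``regime 1'' (partitions with $W_{1}(\Omega)\leq W_{1}(P)-\delta$ are beaten once $\rho<\delta$) is correct as written.

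The genuine gap is the endgame in ``regime 2,'' and it is exactly where you say it is, but it is not a technicality you can wave at: both ingredients you invoke are unproved and not available off the shelf. (i) The ``quantitative propeller theorem'' with a modulus $\omega$ does not follow from bare uniqueness plus compactness in the way you suggest, because the weak-* limit of a sequence of indicator partitions of $\R^{2}$ is in general a fuzzy partition $(f_{1},f_{2},f_{3})$ with $0\leq f_{i}\leq 1$, $\sum_{i}f_{i}=1$; you would need uniqueness of the propeller as a maximizer of $W_{1}$ over \emph{fuzzy} partitions (or an argument that near-maximizers are nearly indicators), which is an additional statement. (ii) The second-variation step is the more serious problem: the ``Hessian of $W_{1}$ at $P$ transverse to rotations'' is not part of the existing propeller analysis, and even granting it, negative-definiteness of a second variation with respect to normal perturbations of $\partial P$ does not by itself yield that $P$ maximizes $F_{\rho}$ on an $L^{1}(\gamma_{2})$-neighborhood of its rotation orbit of size $\omega(\delta)$ uniformly in $\rho$; sets at symmetric-difference distance $\omega(\delta)$ from $P$ need not be graphical perturbations of $\partial P$, so a Taylor expansion along a path of normal perturbations does not cover them. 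The proof in \cite{heilman20d} avoids both issues: it argues by contradiction along a sequence $\rho_{m}\to 0$ of putative non-propeller maximizers and uses the first-order (Euler--Lagrange) characterization of maximizers of the full functional --- each $\Omega_{i}$ is, up to null sets, a region where $T_{\rho}1_{\Omega_{i}}$ plus a Lagrange multiplier dominates --- together with the fact that $T_{\rho}1_{\Omega_{i}}$ is, to leading order in $\rho$, an affine function of $x$ determined by the first moments, to force the maximizer to coincide exactly with the propeller for small $\rho$. If you want to complete your route instead, the two flagged lemmas must actually be proved, and each is comparable in difficulty to the self-consistency argument they would replace.
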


\section{Ranked Choice Voting}

Let $k$ be a positive integer.  Let $\{e_{1},\ldots,e_{k}\}$ denote the standard basis of $\R^{k}$ so that $e_{i}$ has a $1$ in the $i^{th}$ entry and zeros in other entries, for all $1\leq i\leq k$.  Let $\mathcal{S}_{k}$ denote the set of permutations on $k$ elements.  That is, $\mathcal{S}_{k}$ is an ordered list of the integers $\{1,\ldots,k\}$.  For example,
$$\mathcal{S}_{3}=\{(1,2,3),(1,3,2),(2,1,3),(2,3,1),(3,1,2),(3,2,1)\}.$$

A \textbf{ranked choice voting method} with $n$ voters and $k$ candidates is a function
$$f\colon \mathcal{S}_{k}^{n}\to\{e_{1},\ldots,e_{k}\}\subset\R^{k}.$$
Let $x=(x_{1},\ldots,x_{n})\in S_{k}^{n}$.  We think of $x_{i}\in \mathcal{S}_{n}$ as the vote of person $1\leq i\leq n$.

More generally, $f\colon \mathcal{S}_{k}^{n}\to\Delta_{k}$ is a (randomized) voting method where we interpret $f(x)$ as a probability distribution on the winning candidates, given the votes $x$.

Identifying $\mathcal{S}_{k}$ with $\{1,\ldots,k!\}$ via any particular bijection, for any $1\leq i\leq n$ we can define an influence $\mathrm{Inf}_{i}(f)$ as in \eqref{infeq}.

The \textbf{noise stability} can also be defined as in the previous section, as
$$S_{\rho}f\colonequals\sum_{i=1}^{k}S_{\rho}f_{i}= \E\langle f(X),f(Y)\rangle.$$
Here $X,Y$ are defined as in Definition \ref{nsdef2}.

When $f\colon\mathcal{S}_{k}^{n}\to\{e_{1},\ldots,e_{k}\}$, we have
$$S_{\rho}f=\P(f(X)=f(Y)).$$

\begin{definition}\label{clcdef}
A permutation $\pi\in\mathcal{S}_{k}$ is a function $\pi\colon\{1,\ldots,k\}\to\{1,\ldots,k\}$.  We interpret the permutation $\pi$ as a vote, so that $\pi(1)$ is the rank of candidate $1$, $\pi(2)$ is the rank of candidate $2$, and so on.  A candidate $j\in\{1,\ldots,k\}$ is a \textbf{Condorcet loser} for a set of votes $x\in S_{k}^{n}$ if, for each $\ell\in\{1,\ldots,k\}\setminus\{j\}$,
$$\frac{1}{n}\abs{i\in\{1,\ldots,n\}\colon x_{i}(\ell)\geq x_{i}(j)}>1/2.$$
A function $f\colon\mathcal{S}_{k}^{n}\to\Delta_{k}$ satisfies the \textbf{Condorcet Loser Criterion (CLC)} if, for each $x\in S_{k}^{n}$, if $j$ is a Condorcet loser for $x$, then $f(x)\neq j$.
\end{definition}

There is an analogous condition for functions $g\colon \R^{n}\to\Delta_{k}$ with $n\geq k!$.  Suppose we have a bijection $B\colon \{1,\ldots,k!\}\to S_{k}$.  For any $1\leq i<j\leq k$, let $y^{(i,j)}\in\R^{n}$ be such that $y^{(i,j)}_{\ell}=1$ when $B(\ell)(i)>B(\ell)(j)$ and $y^{(i,j)}_{\ell}=-1$ when $B(\ell)(i)<B(\ell)(j)$, for all $1\leq\ell\leq k!$, and $y_{\ell}^{(i,j)}=0$ if $\ell>k!$.  (We also define $y^{(i,j)}\colonequals-y^{(j,i)}$ if $1\leq j<i\leq k$.) Then $g\colon \R^{n}\to\Delta_{k}$ satisfies the \textbf{Condorcet Loser Criterion (CLC)} if, for all $1\leq j\leq k$
$$\{x\in\R^{n}\colon \langle x,y^{(\ell ,j)}\rangle\geq0,\,\forall\,\ell\in\{1,\ldots,k\}\setminus\{j\}\}\cap\{x\in\R^{n}\colon g(x)=j\}=\emptyset.$$

Similarly, a partition of $\R^{n}$ into measurable subsets $\Omega_{1},\ldots,\Omega_{k}$ satisfies the \textbf{Condorcet Loser Criterion (CLC)} if, for all $1\leq j\leq k$,
$$\{x\in\R^{n}\colon \langle x,y^{(\ell ,j)}\rangle\geq0,\,\forall\,\ell\in\{1,\ldots,k\}\setminus\{j\}\}\cap\Omega_{j}=\emptyset.$$

\begin{definition}
Define the \textbf{Borda Count} function $\mathrm{Bor}_{k,\sdimn}\colon S_{k}^{\sdimn}\to\Delta_{k}$ so that, for any $x\in S_{k}^{n}$,
$$\mathrm{Bor}_{k,\sdimn}(x)
\colonequals\begin{cases}
e_{j}&,\mbox{if }\sum_{i=1}^{\sdimn}x_{i}(j)>\sum_{i=1}^{\sdimn}x_{i}(r),\,\forall\,r\in\{1,\ldots,k\}\setminus\{j\}\\
\frac{1}{k}\sum_{i=1}^{k}e_{i}&,\mbox{otherwise}.
\end{cases}
$$
\end{definition}
Observe that the Borda Count function is the plurality function applied to a vector of pairwise comparisons, i.e. if $x\in S_{k}^{\sdimn}$, we have a vector of pairwise comparisons $(\omega^{(1)},\ldots,\omega^{(\sdimn)})$ where $\omega^{\ell}\in\{1,\ldots,k\}^{\binom{k}{2}}$, if for any $1\leq\ell\leq\sdimn$, define $\omega^{(\ell)}_{\{i,j\}}\colonequals j$ if $x_{\ell}(j)>x_{\ell}(i)$ and $\omega^{(\ell)}_{\{i,j\}}\colonequals i$ if $x_{\ell}(j)<x_{\ell}(i)$.  Then $\omega\colonequals(\omega^{(1)},\ldots,\omega^{(\sdimn)})\in(\{1,\ldots,k\}^{\binom{k}{2}})^{\sdimn}=\{1,\ldots,k\}^{\sdimn\binom{k}{2}}$.  Then
$$\mathrm{Bor}_{k,\sdimn}(x)=\mathrm{Plur}_{k,n\binom{k}{2}}(\omega).$$
%
%

We then have the following adaptation of Conjecture \ref{pis} to the ranked-choice setting

\begin{conj}[\embolden{Borda Count is Stablest Conjecture}]\label{bcis}
For any $k\geq2$, $\rho\in[0,1]$, $\epsilon>0$, there exists $\tau>0$ such that, for all $n\geq1$, if $f\colon S_{k}^{\sdimn}\to\Delta_{k}$ satisfies $\mathrm{Inf}_{i}(f_{j})\leq\tau$ for all $1\leq i\leq\sdimn$ and for all $1\leq j\leq k$, if $\E f=\frac{1}{k}\sum_{i=1}^{k}e_{i}$, and if $f$ satisfies the Condorcet Loser Criterion, then
$$
S_{\rho}f\leq \lim_{\sdimn\to\infty}S_{\rho}\mathrm{Bor}_{k,\sdimn}+\epsilon.
$$
\end{conj}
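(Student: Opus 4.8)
We show that Conjecture \ref{pis} implies Conjecture \ref{bcis}; fix $k\ge2$, $\rho\in[0,1]$, $\epsilon>0$, and let $C_{k,\rho}$ denote the maximum value of Problem \ref{prob2} in $\R^{k-1}$ at correlation $\rho$. The plan rests on three facts. First, assuming Conjecture \ref{pis}, there is $\tau=\tau(k,\rho,\epsilon)>0$ such that every balanced $f\colon S_k^n\to\Delta_k$ with all influences at most $\tau$ satisfies $S_\rho f\le C_{k,\rho}+\epsilon$, for all $n$ — and this bound does not use the Condorcet Loser Criterion at all. Second, $\lim_{n\to\infty}S_\rho\mathrm{Bor}_{k,n}=C_{k,\rho}$. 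Third, $\mathrm{Bor}_{k,n}$ is itself balanced, has influences $O(n^{-1/2})$, and satisfies CLC, so it is an admissible competitor in Conjecture \ref{bcis} whose stability attains the upper bound as $n\to\infty$. Combining the three gives Conjecture \ref{bcis}; the role of CLC is only to single out Borda — rather than, e.g., plurality applied to top choices, which also attains $C_{k,\rho}$ — as the distinguished optimizer.

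For the first fact I would run the invariance principle exactly as in \cite{isaksson11} for the equivalence of Conjectures \ref{pis} and \ref{conj2}: if every influence of a balanced $f\colon S_k^n\to\Delta_k$ is at most $\tau$, then $S_\rho f$ is within $o_\tau(1)$ (uniformly in $n$) of $\sum_{i=1}^{k}\int_{\R^N}F_i\,T_\rho F_i\,\d\gamma_N$ for some $F=(F_1,\dots,F_k)\colon\R^N\to\Delta_k$ with each $\E F_i=1/k+o_\tau(1)$. Because $F\mapsto\sum_i\int F_i\,T_\rho F_i\,\d\gamma_N=\sum_i\|T_{\sqrt\rho}F_i\|_2^2$ is convex (here $\rho\ge0$), the standard rounding argument \cite{isaksson11,heilman20d} shows its supremum over such $F$ is attained, up to $o_\tau(1)$, at a partition $\Omega_1,\dots,\Omega_k$ of $\R^N$ with $\gamma_N(\Omega_i)=1/k$; hence $S_\rho f\le(\text{maximum of Problem \ref{prob2} in }\R^N)+o_\tau(1)$. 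The Dimension Reduction Theorem \ref{pisc} makes the maximum of Problem \ref{prob2} the same in every dimension at least $k-1$, so it equals $C_{k,\rho}$, and choosing $\tau$ small enough finishes this step.

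For the second fact I would use the multivariate Central Limit Theorem. With $X_1,\dots,X_n$ i.i.d.\ uniform on $S_k$ and $Y$ its $\rho$-correlated copy (each $X_i$ kept with probability $\rho$, resampled uniformly otherwise), the recentered and $\sqrt n$-normalized vector of Borda scores $\big(\sum_i X_i(j)\big)_{j=1}^{k}$, together with its $Y$-counterpart, converges to a jointly Gaussian pair $(G,G')$ supported on $\{v\in\R^k\colon\sum_j v_j=0\}$ with $\mathrm{Cov}(G)=\mathrm{Cov}(G')=\frac{k+1}{12}(kI-J)$ and $\mathrm{Cov}(G,G')=\rho\,\frac{k+1}{12}(kI-J)$, where $I$ is the identity and $J$ the all-ones matrix. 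On that hyperplane this is, up to an irrelevant overall scalar, precisely the joint law of a $\rho$-correlated pair of standard Gaussians on $\R^{k-1}$, and — identifying the hyperplane with $\R^{k-1}$ — the event that $G$ and $G'$ have a common coordinatewise maximizer is exactly the event that the pair lands in one cell of the standard simplex partition $\{x\colon\langle x,z^{(i)}\rangle=\max_j\langle x,z^{(j)}\rangle\}_{i=1}^{k}$ of Conjecture \ref{conj2}, whose vertices $z^{(i)}$ are the projections of $e_1,\dots,e_k$ onto the hyperplane. Since the probability of a tie in the Borda scores tends to $0$, it follows that $\lim_n S_\rho\mathrm{Bor}_{k,n}$ equals the noise stability of this partition, which by Conjecture \ref{conj2} equals $C_{k,\rho}$. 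The third fact is routine: balancedness is symmetry under relabeling candidates; one voter's influence is $O(n^{-1/2})$ because a single vote shifts each Borda score by $O(1)$ and so changes the winner only on an event of probability $O(n^{-1/2})$; and CLC holds because a Condorcet loser $j$ has, for each $\ell\ne j$, fewer than $n/2$ voters preferring $j$ to $\ell$, forcing its Borda score strictly below the common average $n(k+1)/2$ of all $k$ Borda scores, so $j$ is never the unique maximizer.

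I expect the main obstacle to be the bookkeeping in the first fact — quantifying the invariance-principle errors so that one $\tau=\tau(k,\rho,\epsilon)$ works for all $n$, and making precise the passage from near-optimal $\Delta_k$-valued Gaussian functions to genuine partitions with $\gamma_N(\Omega_i)=1/k$, so that Problem \ref{prob2} and Theorem \ref{pisc} apply on the nose. This is essentially the content of the equivalence of Conjectures \ref{pis} and \ref{conj2} in \cite{isaksson11} combined with \cite{heilman20d}, but it is where the real work lies. A secondary point is to check that the whole-permutation resampling model for corrupting a ranked vote induces exactly the correlation $\rho$ used above and matches the parameter in Conjecture \ref{pis}; the invariance principle then also yields its interchangeability with the pairwise-swap corruption model mentioned after Conjecture \ref{bisinf}.
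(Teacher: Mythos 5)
Your proposal is correct and follows essentially the same route as the paper's proof of Theorem \ref{thm0}: assuming Plurality is Stablest (equivalently, via the invariance principle of \cite{isaksson11}, the Standard Simplex Conjecture), every balanced low-influence $f\colon S_k^n\to\Delta_k$ is bounded by the optimal Gaussian $k$-partition value irrespective of CLC, while a multivariate CLT applied to the (recentered, normalized) Borda score vector shows $\lim_{n\to\infty}S_\rho\mathrm{Bor}_{k,n}$ equals the noise stability of the standard simplex partition, i.e.\ $\lim_{n\to\infty}S_\rho\mathrm{Plur}_{k,n}$. If anything you are more explicit than the paper on two points it leaves implicit — that the upper bound must be run through the Gaussian formulation because the domain alphabet has size $k!$ rather than $k$, and the verification that Borda count is balanced, has $O(n^{-1/2})$ influences, and satisfies CLC.
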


As shown below in \eqref{bor1} and \eqref{bor3}, the noise stability of the Borda count function as the number of voters $n$ goes to infinity is the same as the noise stability of the Plurality function as $n\to\infty$.  So, the continuous version of Conjecture \ref{bcis} can be stated as follows.

\begin{prob}\label{sscprob}
Let $\rho\in(0,1)$.  Let $n\geq k!+k-1$.  Find a partition $\Omega_{1},\ldots,\Omega_{n}\subset\R^{n}$ be a partition satisfying $\gamma_{n}(\Omega_{i})=1/k$ for all $1\leq i\leq k$ that maximizes
$$\sum_{i=1}^{k}S_{\rho}(\Omega_{i})$$
subject to these constraints.
\end{prob}

\begin{conj}[\embolden{Standard Simplex Conjecture for Ranked Choice Voting}]\label{sscrv}
Let $\rho\in(0,1)$.  Let $n\geq k!+k-1$.  Then the sets $\Omega_{1},\ldots,\Omega_{n}\subset\R^{n}$ maximizing problem \ref{sscprob} are those defined in \eqref{ome1}.
\end{conj}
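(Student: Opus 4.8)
\noindent\emph{Proof idea (conditional on Conjecture \ref{conj2}).}
The plan is to derive Conjecture \ref{sscrv} from the Standard Simplex Conjecture (Conjecture \ref{conj2}), mirroring the way Theorem \ref{thm0} derives the discrete Borda count is Stablest statement from the Plurality is Stablest Conjecture. Since Conjecture \ref{conj2} is currently open, this conditional implication is the most one can hope for in general; inserting instead the Corollary that follows Theorem \ref{pisc} would give the unconditional case $k=3$, $0<\rho<\rho_{0}$. Throughout, ``partition'' and the Condorcet Loser Criterion are read modulo $\gamma_{n}$-null sets, which is harmless since noise stability and the volume constraint ignore null sets.

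First, a soft comparison of feasible sets: any partition feasible for Problem \ref{sscprob} (volumes $1/k$ and satisfying CLC) is in particular feasible for Problem \ref{prob2} (volumes $1/k$), so the maximum of $\sum_{i=1}^{k}S_{\rho}(\Omega_{i})$ over Problem \ref{sscprob} is at most its maximum over Problem \ref{prob2}. Granting Conjecture \ref{conj2} and the dimension reduction Theorem \ref{pisc} (both applicable since $n\geq k!+k-1\geq k-1$), the latter maximum is attained exactly by the partitions defined in \eqref{ome1}, and equals $\lim_{n\to\infty}S_{\rho}\mathrm{Plur}_{k,n}$, which in turn equals $\lim_{n\to\infty}S_{\rho}\mathrm{Bor}_{k,n}$ by the identities \eqref{bor1} and \eqref{bor3}.

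The crux is to produce one partition of the form \eqref{ome1} that is itself feasible for Problem \ref{sscprob}, i.e.\ satisfies CLC. I would take $z^{(i)}\colonequals\sum_{\ell\neq i}y^{(i,\ell)}\in\R^{n}$, the ``Gaussian Borda vector'' of candidate $i$, where the $y^{(i,\ell)}$ are the pairwise-comparison vectors from the definition of the Condorcet Loser Criterion; these are supported on the first $k!$ coordinates of $\R^{n}$, so the hypothesis $n\geq k!+k-1$ leaves ample room (indeed the simplex here lies inside the span of the $y^{(i,\ell)}$). The vectors $z^{(1)},\ldots,z^{(k)}$ sum to $0$ since $y^{(i,\ell)}=-y^{(\ell,i)}$, and the coordinate action of $\mathcal{S}_{k}$ on $\R^{k!}$ induced by relabeling candidates permutes $\{z^{(1)},\ldots,z^{(k)}\}$ isometrically; by transitivity and $2$-transitivity of that action, all $z^{(i)}$ have equal norm and all pairwise inner products $\langle z^{(i)},z^{(j)}\rangle$ ($i\neq j$) are equal. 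Hence $\{z^{(1)},\ldots,z^{(k)}\}$ is a regular simplex centered at the origin spanning a $(k-1)$-dimensional subspace, so (granting Conjecture \ref{conj2}) the attached partition \eqref{ome1} is a maximizer of Problem \ref{prob2}, and by the symmetry of the simplex its parts have equal $\gamma_{n}$-measure $1/k$. Finally $\langle x,z^{(j)}\rangle=\sum_{\ell\neq j}\langle x,y^{(j,\ell)}\rangle$ is the ``Gaussian Borda score'' of candidate $j$, and these scores sum to $0$; if $j$ is a Condorcet loser for $x$ then $\langle x,y^{(\ell,j)}\rangle\geq0$, i.e.\ $\langle x,y^{(j,\ell)}\rangle\leq0$, for all $\ell\neq j$, whence the Borda score of $j$ is $\leq0\leq\max_{i}\langle x,z^{(i)}\rangle$, so $x\in\Omega_{j}$ can occur only where all $k$ Borda scores vanish, a $\gamma_{n}$-null set. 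This is exactly the continuous counterpart of the elementary fact that a Condorcet loser has strictly below-average Borda score, so the Borda winner is never a Condorcet loser. Thus this \eqref{ome1}-partition satisfies CLC, attains the maximum of Problem \ref{prob2}, hence a fortiori that of Problem \ref{sscprob}; and any maximizer of Problem \ref{sscprob} then has value equal to the maximum of Problem \ref{prob2} while being feasible for the latter, so by the characterization of maximizers in Conjecture \ref{conj2} it is a partition of the form \eqref{ome1}, which is Conjecture \ref{sscrv}.

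The main obstacle, apart from the reliance on the open Conjecture \ref{conj2}, is the handling of ties and degenerate configurations: Definition \ref{clcdef} asks literally for an empty intersection, whereas the construction above only gives emptiness off a subspace of codimension $k-1$, so one must either fix the ``modulo null sets'' convention in the statement of Problem \ref{sscprob} (natural for noise stability) or slightly perturb the construction while keeping \eqref{ome1} an exact regular-simplex partition. A secondary technical point is to confirm, via $\mathrm{Bor}_{k,n}=\mathrm{Plur}_{k,\,n\binom{k}{2}}(\omega)$ and the invariance principle, that the discrete Borda count Gaussianizes to precisely this partition (with $z^{(i)}$ the Gaussian Borda vectors, up to a common positive scalar), so that the value at \eqref{ome1} really is $\lim_{n\to\infty}S_{\rho}\mathrm{Bor}_{k,n}$ as asserted in \eqref{bor1} and \eqref{bor3}.
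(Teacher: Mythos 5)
This statement is a \emph{conjecture} in the paper: no proof of it is given, so there is nothing to compare your argument against line by line. What you have written is not a proof of the conjecture but a (correct, and clearly flagged as such) conditional reduction of Conjecture \ref{sscrv} to the open Standard Simplex Conjecture \ref{conj2}. That reduction is exactly the Gaussian-space analogue of what the paper actually proves in Theorem \ref{thm0}, and your key new ingredient --- that the regular-simplex partition built from the ``Gaussian Borda vectors'' $z^{(i)}=\sum_{\ell\neq i}y^{(i,\ell)}$ satisfies CLC, because a Condorcet loser's Borda score $\langle x,z^{(j)}\rangle$ is nonpositive while the scores sum to zero --- is sound and is precisely the continuous version of the classical fact that Borda count satisfies the Condorcet Loser Criterion. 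The $2$-transitivity argument showing the $z^{(i)}$ form a regular simplex is also correct (they are nonzero, sum to zero, and the candidate-relabeling action permutes them orthogonally). This verification is a genuinely useful supplement, since the paper never explicitly checks that the limiting Borda partition is feasible for the CLC-constrained problem; it only establishes the value identity \eqref{bor1}--\eqref{bor3}.

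Two caveats you already partly note, stated more bluntly. First, Problem \ref{sscprob} as literally written in the paper does not include the CLC constraint (it only imposes $\gamma_{n}(\Omega_{i})=1/k$), so read literally Conjecture \ref{sscrv} is just Conjecture \ref{conj2} with a larger ambient dimension; your reading, importing CLC from Theorem \ref{bisdim} and the surrounding discussion, is surely the intended one, but you should say explicitly that you are repairing the statement. Second, your CLC verification holds only up to a $\gamma_{n}$-null set (the codimension-$(k-1)$ subspace where all Borda scores vanish), whereas Definition \ref{clcdef}'s continuous version demands a literally empty intersection; since noise stability and the measure constraints are insensitive to null sets, this is cosmetic, but the partition \eqref{ome1} with ties assigned by ``$=\max$'' does not literally satisfy the stated CLC, so either the definition or the tie-breaking convention must be adjusted. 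Neither caveat is a gap in the mathematics; the only irreducible gap is the reliance on Conjecture \ref{conj2}, which is unavoidable given that \ref{sscrv} is itself posed as a conjecture.
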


The main result of \cite{heilman20d} can be adapted to Conjecture \eqref{sscrv} as follows.

\begin{theorem}[\embolden{Dimension Reduction}]\label{bisdim}
Let $\rho\in(0,1)$.  Let $n\geq k!+k-1$.  Let $\Omega_{1},\ldots,\Omega_{k}\subset\R^{n}$ satisfy CLC and maximize Problem \ref{sscprob}.  Then, after rotating $\Omega_{1},\ldots,\Omega_{k}$, there exists $\Theta_{1},\ldots,\Theta_{n}\subset\R^{k!+k-1}$ such that
$$\Omega_{i}=\Theta_{i}\times\R^{n-[k!+k-1]},\qquad\forall\,1\leq i\leq k.$$
\end{theorem}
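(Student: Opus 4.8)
The plan is to adapt the dimension-reduction argument behind Theorem \ref{pisc} by separating the coordinates that the Condorcet Loser Criterion can detect from those it cannot. Each vector $y^{(i,j)}$ ($1\le i<j\le k$) entering the definition of CLC for partitions of $\R^{n}$ is supported on the coordinates $1,\ldots,k!$, hence lies in $W\colonequals\mathrm{span}(e_{1},\ldots,e_{k!})$ (in fact in a subspace of dimension $\le\binom{k}{2}$, but it is cleaner not to track this). Write $\R^{n}=W\oplus W^{\perp}$ with $W^{\perp}=\mathrm{span}(e_{k!+1},\ldots,e_{n})$ and $x=(x',x'')$ accordingly; the hypothesis $n\ge k!+k-1$ says exactly that $\dim W^{\perp}=n-k!\ge k-1$. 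The objective $\sum_{i}S_{\rho}(\Omega_{i})$ and the volume constraints $\gamma_{n}(\Omega_{i})=1/k$ are $O(n)$-invariant, whereas CLC depends only on $x'$ and is therefore invariant under the subgroup $\{I_{k!}\}\times O(n-k!)$ acting on the $x''$ block, which fixes $W$ pointwise. So the idea is to run the dimension reduction of \cite{heilman20d} \emph{inside the $W^{\perp}$ factor only}, with the $k!$ coordinates of $x'$ carried along as inert parameters, cutting $n-k!$ down to $k-1$ and leaving the reduced sets inside $W\oplus\R^{k-1}=\R^{k!+k-1}$.

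First I would record existence and the first-order conditions. Feasibility holds because each cone $\{x:\langle x,y^{(\ell,j)}\rangle\ge0,\ \forall\,\ell\ne j\}$ sits inside a half-space, hence has Gaussian measure $\le1/2\le(k-1)/k$, leaving room for a partition with $\gamma_{n}(\Omega_{i})=1/k$ avoiding the cones; a maximizer among CLC partitions then exists by the usual weak-compactness argument, since $T_{\rho}$ is compact on $L^{2}(\gamma_{n})$, the functional $\sum_{i}\langle 1_{\Omega_{i}},T_{\rho}1_{\Omega_{i}}\rangle$ is weakly continuous, the constraints $\int_{C_{j}}1_{\Omega_{j}}\,\d\gamma_{n}=0$ pass to weak limits, and convexity lets us take the maximizer to be an indicator partition. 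If, as I expect, the proof of Theorem \ref{pisc} proceeds through Euler--Lagrange conditions, I would note that the CLC constraints, each of the form $\int_{C_{j}}1_{\Omega_{j}}\,\d\gamma_{n}=0$, only introduce extra multipliers, changing the fixed-point equation to
\[
\Omega_{i}=\Bigl\{x:\ T_{\rho}1_{\Omega_{i}}(x)-\lambda_{i}-\mu_{i}1_{C_{i}}(x)=\max_{1\le j\le k}\bigl(T_{\rho}1_{\Omega_{j}}(x)-\lambda_{j}-\mu_{j}1_{C_{j}}(x)\bigr)\Bigr\}
\]
up to a null set; the new terms $\mu_{i}1_{C_{i}}$ depend only on $x'$ and are locally constant off the Gaussian-null set $\bigcup_{j}\partial C_{j}$, so in the $x''$ directions this is exactly the equation treated in \cite{heilman20d} with $x'$ frozen.

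Then I would invoke the dimension-reduction machinery of \cite{heilman20d} applied in the factor $W^{\perp}\cong\R^{n-k!}$. That machinery uses only maximality, the fixed-point structure above, and invariance of the objective and of the volume constraints under rotations of the working space; in the $x''$ block these become maximality, the displayed equation (with $x'$ as parameters), and $O(n-k!)$-invariance, while CLC is simply untouched because it does not involve $x''$. This yields a rotation $R\in\{I_{k!}\}\times O(n-k!)$ with $R\Omega_{i}=\Theta_{i}\times\R^{n-(k!+k-1)}$ for measurable $\Theta_{i}\subseteq\R^{k!+k-1}$; since $R$ fixes $W$ pointwise it also preserves the volume constraints and CLC, so $R\Omega_{1},\ldots,R\Omega_{k}$ is again a CLC maximizer, which is the assertion of the theorem.

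The hard part will be making this ``relative'' version of \cite{heilman20d} rigorous: that paper reduces a maximizer of the \emph{unconstrained} standard simplex problem in $\R^{m}$ to a cylinder, and one must verify that re-running its proof with a block of $k!$ inert parameter coordinates --- and with the locally-constant perturbation $\mu_{i}1_{C_{i}}$ present --- still forces cylindrical invariance in all but $k-1$ of the remaining $n-k!$ directions. I expect this to go through essentially verbatim, because the parameter coordinates enter only through the kernel of $T_{\rho}$, which factors over $W\oplus W^{\perp}$, and translations and rotations in the $x''$ block leave $1_{C_{i}}$ unchanged; but this verification, together with the (routine) feasibility and constraint-qualification checks, is where essentially all of the content lies. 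One could sharpen $k!+k-1$ to $\binom{k}{2}+k-1$ by first arranging $\mathrm{span}\{y^{(i,j)}\}$ to be a coordinate subspace, but I would not pursue this.
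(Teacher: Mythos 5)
Your proposal is correct and matches the paper's (very terse) proof in its essential idea: the paper simply cites the argument of \cite{heilman20d} together with the single observation that perturbations (there, translations $v$) perpendicular to all the $y^{(i,j)}$ preserve CLC, which is exactly your point that CLC is inert in the $W^{\perp}$ block so the dimension reduction can be run there. Your extra apparatus (Lagrange multipliers $\mu_{i}1_{C_{i}}$ for the CLC constraints) is not needed in the paper's framing, since the admissible perturbations never leave the CLC-feasible class, but it does not change the argument.
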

\begin{proof}
The argument is the same as that of \cite{heilman20d}, with the additional observation that, if $v\in\R^{n}$ is perpendicular to the vectors $y^{(i,j)}$ for all $1\leq i<j\leq n$, then the sets $\Omega_{1}+v,\ldots\Omega_{n}+v$ also satisfy CLC.
\end{proof}

\section{Plurality is Stablest Implies Borda count is Stablest}

\begin{proof}[Proof of Theorem \ref{thm0}]

Let $z^{(1)},\ldots,z^{(k)}$ be the vertices of a regular simplex centered at the origin of $\R^{k-1}$ with $\vnorm{z^{(i)}}=1$ for all $1\leq i\leq k$ and $\langle z^{(i)},z^{(j)}\rangle=-1/(k-1)$ for all $1\leq i<j\leq k$.  Recall the elementary identities $1^{2}+\cdots+k^{2}=k(k+1)(2k+1)/6$, $(1+\cdots+k)^{2}=(k(k+1)/2)^{2}$, and $(1+\cdots+k)^{2}-(1^{2}+\cdots+k^{2})=k(k+1)(k-1)(3k+2)/12$.  So, if $\pi\in S_{k}$ is uniformly random, we have
$$\E [\pi(1)]^{2}=(k+1)(2k+1)/6,\qquad \E \pi(1)\pi(2)=(k+1)(3k+2)/12.$$
\begin{equation}\label{pieq}
\E [\pi(1)]^{2}-\E \pi(1)\pi(2)=k(k+1)/12.
\end{equation}
So, using these identities and $\sum_{i=1}^{k}z^{(i)}=0$, for any $1\leq\ell\leq k$ we have
\begin{flalign*}
&\E\Big( \sum_{i=1}^{k}\pi(i)z^{(i)}\Big)\Big(\sum_{j=1}^{k}\pi(j)z^{(j)}\Big)^{T}z^{(\ell)}\\
&\qquad=\E \sum_{i=1}^{k}\pi(i)z^{(i)}\Big(\pi(\ell)-\frac{1}{k-1}\sum_{j\colon j\neq\ell}\pi(j)\Big)\\
&\qquad=z^{(\ell)}\E [\pi(1)]^{2}
+\E \pi(1)\pi(2)\sum_{i\colon i\neq\ell}z^{(i)}\\
&\qquad\qquad\qquad-\frac{1}{k-1}\sum_{j\colon j\neq\ell}\Big(\E [\pi(1)]^{2} z^{(j)}+\E \pi(1)\pi(2)\sum_{i\colon i\neq j} z^{(i)}\Big)\\
\end{flalign*}
\begin{flalign*}
&\qquad=z^{(\ell)}\Big(\E [\pi(1)]^{2}-\E \pi(1)\pi(2)\Big)-\frac{1}{k-1}\sum_{j\colon j\neq\ell}z^{(j)}\Big(\E [\pi(1)]^{2}-\E \pi(1)\pi(2)\Big)\\
&\qquad\stackrel{\eqref{pieq}}{=}\frac{k(k+1)}{12}\Big(z^{(\ell)}-\frac{1}{k-1}\sum_{j\colon j\neq\ell}z^{(j)}\Big)
=\frac{k(k+1)}{12}z^{(\ell)}\Big(1+\frac{1}{k-1}\Big)
=\frac{k^{2}(k+1)}{12(k-1)}z^{(\ell)}.
\end{flalign*}
Therefore, denoting the identity matrix as $I$,
$$
\frac{12(k-1)}{k^{2}(k+1)}
\E\Big( \sum_{i=1}^{k}\pi(i)z^{(i)}\Big)\Big(\sum_{j=1}^{k}\pi(j)z^{(j)}\Big)^{T}=I.
$$
Also, $\E  \sum_{i=1}^{k}\pi(i)z^{(i)}=0$, and if $\sigma\in S_{k}$ is such that $\sigma=\pi$ with probability $1-\epsilon$, and with probability $0<\epsilon<1$, $\sigma$ is a uniformly random element of $S_{k}$ independent of $\pi$, then
\begin{flalign*}
\E  \Big(\sum_{i=1}^{k}\pi(i)z^{(i)}\Big)\Big(\sum_{j=1}^{k}\sigma(j)z^{(j)}\Big)^{T}
&=(1-\epsilon)\E\Big( \sum_{i=1}^{k}\pi(i)z^{(i)}\Big)\Big(\sum_{j=1}^{k}\pi(j)z^{(j)}\Big)^{T}z^{(\ell)}\\
&=(1-\epsilon)\frac{k^{2}(k+1)}{12(k-1)}I.
\end{flalign*}

For any $x\in S_{k}^{n}$, define
$$V(x)\colonequals \frac{2\sqrt{3}\sqrt{k-1}}{k\sqrt{k+1}}\frac{1}{\sqrt{n}}\sum_{j=1}^{n}\sum_{i=1}^{k}x_{j}(i)z^{(i)}.$$
If $X\in S_{k}^{n}$ is uniform, and $Y\in S_{k}^{n}$ satisfies $Y_{1},\ldots,Y_{n}$ are independent, $Y_{i}=X_{i}$ with probability $1-\epsilon$ and with probability $\epsilon$, $Y_{i}\in S_{k}$ is independent of $X_{i}$ and uniformly random in $S_{k}$, then as $n\to\infty$,
$$(V(X),V(Y))$$
converges in distribution to a Gaussian random vector $(W,Z)$ where $W,Z\in\R^{k}$ are Gaussian random variables with mean, they each have identity covariance matrices, and $\E W_{i}Z_{j}=(1-\epsilon)1_{\{i=j\}}$ for all $1\leq i,j\leq k$.  Since also $\mathrm{Bor}_{k,n}(x)=j$ if and only if $\langle V(x),z^{(j)}\rangle=\max_{1\leq i\leq k}\langle V(x),z^{(i)}\rangle$ if and only if $V(x)\in\Omega_{j}$ where $\Omega_{1},\ldots,\Omega_{k}$ are defined in \eqref{ome1},
\begin{equation}\label{bor1}
\lim_{n\to\infty}S_{\left(1-\epsilon\right)}\mathrm{Bor}_{k,n}=\sum_{i=1}^{k}S_{(1-\epsilon)}(\Omega_{i}).
\end{equation}
Similarly, we have \cite[Lemma 7.3]{isaksson11}
\begin{equation}\label{bor2}
\lim_{n\to\infty}S_{\left(1-\epsilon\right)}\mathrm{Plur}_{k,n}=\sum_{i=1}^{k}S_{(1-\epsilon)}(\Omega_{i}).
\end{equation}
Together, \eqref{bor1} and \eqref{bor2} imply that
\begin{equation}\label{bor3}
\lim_{n\to\infty}S_{\left(1-\epsilon\right)}\mathrm{Bor}_{k,n}
=\lim_{n\to\infty}S_{\left(1-\epsilon\right)}\mathrm{Plur}_{k,n}.
\end{equation}
Equation \ref{bor3} concludes the proof, since if the Plurality is Stablest Conjecture holds (for $k$ candidates), the plurality function itself is an upper bound for the noise stability of a ranked choice voting method for $k$ candidates.

For completeness, we review the proof of \eqref{bor2}.  Modifying the calculation above, we have $\E z^{(\pi(1))}=0$, $\E z^{(\pi(1))}[z^{(\pi(1))}]^{T}=\frac{k}{k-1}I$, $\E z^{(\pi(1))}[z^{(\sigma(1))}]^{T}=(1-\epsilon)\frac{k}{k-1}I$, so
$$W(x)\colonequals \sqrt{\frac{k-1}{k}}\frac{1}{\sqrt{n}}\sum_{j=1}^{n}z^{(\pi(1))}$$
satisfies $(W(X),W(Y))$ also converges in distribution as $n\to\infty$ to a Gaussian random vector $(W,Z)$ where $W,Z\in\R^{k}$ are Gaussian random variables with mean, they each have identity covariance matrices, and $\E W_{i}Z_{j}=(1-\epsilon)1_{\{i=j\}}$ for all $1\leq i,j\leq k$.  Finally, $\mathrm{Plur}_{k,n}(x)=j$ if and only if $\langle W(x),z^{(j)}\rangle=\max_{1\leq i\leq k}\langle W(x),z^{(i)}\rangle$ if and only if $W(x)\in\Omega_{j}$ where $\Omega_{1},\ldots,\Omega_{k}$ are defined in \eqref{ome1}, so \eqref{bor2} holds.
\end{proof}

\section{Other Variants of the Borda count is Stablest Conjecture}

\begin{remark}
Under the assumption that voters rank candidate independently and uniformly at random, there are at least three natural restrictions to impose on ranked choice voting methods:
\begin{itemize}
\item Condorcet Loser Criterion
\item Condorcet (Winner) Criterion
\item Smith Criterion
\end{itemize}
\end{remark}
Above, we considered maximizing the noise stability of ranked choice voting methods that satisfy the Condorcet Loser Criterion, and we conjectured that Borda count is the most noise stable ``democratic'' voting method.  Below, we present an analogue if this conjecture where we instead restrict to voting methods satisfying the Condorcet (Winner) Criterion.  Borda count does not satisfy this condition, and we do not have a guess for which voting method satisfying this criterion is the most noise stable.

\subsection{Condorcet (Winner) Criterion}

A permutation $\pi\in\mathcal{S}_{k}$ is a function $\pi\colon\{1,\ldots,k\}\to\{1,\ldots,k\}$.  We interpret the permutation $\pi$ as a vote, so that $\pi(1)$ is the rank of candidate $1$, $\pi(2)$ is the rank of candidate $2$, and so on.  A candidate $j\in\{1,\ldots,k\}$ is a \textbf{Condorcet winner} for a set of votes $x\in S_{k}^{n}$ if, for each $\ell\in\{1,\ldots,k\}\setminus\{j\}$,
$$\frac{1}{n}\abs{i\in\{1,\ldots,n\}\colon x_{i}(\ell)\leq x_{i}(j)}>1/2.$$
A function $f\colon\mathcal{S}_{k}^{n}\to\Delta_{k}$ satisfies the \textbf{Condorcet Winner Criterion (CWC)} or simply the \textbf{Condorcet Criterion} if, for each $x\in S_{k}^{n}$, if $j$ is a Condorcet winner for $x$, then $f(x)= j$.

There is an analogous condition for functions $g\colon \R^{n}\to\Delta_{k}$ with $n\geq k!$.  Suppose we have a bijection $B\colon \{1,\ldots,k!\}\to S_{k}$.  For any $1\leq i<j\leq k$, let $y^{(i,j)}\in\R^{n}$ be such that $y^{(i,j)}_{\ell}=1$ when $B(\ell)(i)>B(\ell)(j)$ and $y^{(i,j)}_{\ell}=-1$ when $B(\ell)(i)<B(\ell)(j)$, for all $1\leq\ell\leq k!$, and let $y_{\ell}^{(i,j)}=0$ if $\ell>k!$.  (We also define $y^{(i,j)}\colonequals-y^{(j,i)}$ if $1\leq j<i\leq k$.) Then $g\colon \R^{n}\to\Delta_{k}$ satisfies the \textbf{Condorcet Winner Criterion (CWC)} if, for all $1\leq j\leq k$
$$\{x\in\R^{n}\colon \langle x,y^{(\ell ,j)}\rangle\leq0,\,\forall\,\ell\in\{1,\ldots,k\}\setminus\{j\}\}\subset\{x\in\R^{n}\colon g(x)=j\}.$$

Similarly, a partition of $\R^{n}$ into subsets $\Omega_{1},\ldots,\Omega_{k}$ satisfies the \textbf{Condorcet Winner Criterion (CWC)} if, for all $1\leq j\leq k$,
$$\{x\in\R^{n}\colon \langle x,y^{(\ell ,j)}\rangle\leq0,\,\forall\,\ell\in\{1,\ldots,k\}\setminus\{j\}\}\subset\Omega_{j}.$$

\begin{prob}[\embolden{Stablest Ranked Choice Voting for CWC}]\label{cwcprob}
Find a sequence of functions $g=g_{k,n}\colon S_{k}^{n}\to\Delta_{k}$ satisfying CWC such that the following holds, where $\E g=\frac{1}{k}\sum_{i=1}^{k}e_{i}$ and $\mathrm{Inf}_{i}(g_{j})\leq\tau$ for all $1\leq i\leq\sdimn$ and for all $1\leq j\leq k$.

For any $k\geq2$, $\rho\in[0,1]$, $\epsilon>0$, there exists $\tau>0$ such that, for all $n\geq1$, if $f\colon S_{k}^{\sdimn}\to\Delta_{k}$ satisfies $\mathrm{Inf}_{i}(f_{j})\leq\tau$ for all $1\leq i\leq\sdimn$ and for all $1\leq j\leq k$, if $\E f=\frac{1}{k}\sum_{i=1}^{k}e_{i}$, and if $f$ satisfies the Condorcet Winner Criterion, then
$$
S_{\rho}f\leq \lim_{\sdimn\to\infty}S_{\rho}g_{k,\sdimn}+\epsilon.
$$
\end{prob}

\medskip
\noindent\textbf{Acknowledgement}.  Thanks to Jake Freeman, Elchanan Mossel and Alex Tarter for helpful discussions.

\bibliographystyle{amsalpha}

\newcommand{\etalchar}[1]{$^{#1}$}
\def\polhk#1{\setbox0=\hbox{#1}{\ooalign{\hidewidth
  \lower1.5ex\hbox{`}\hidewidth\crcr\unhbox0}}} \def\cprime{$'$}
  \def\cprime{$'$}
\providecommand{\bysame}{\leavevmode\hbox to3em{\hrulefill}\thinspace}
\providecommand{\MR}{\relax\ifhmode\unskip\space\fi MR }
\providecommand{\MRhref}[2]{%
  \href{http://www.ams.org/mathscinet-getitem?mr=#1}{#2}
}
\providecommand{\href}[2]{#2}

\end{document}